\newcommand{\Bernoulli}{\operatorname{Bernoulli}}
\newcommand{\Beta}{\operatorname{Beta}}
\newcommand{\Binomial}{\operatorname{Binomial}}
\newcommand{\Branches}{\mathcal{B}}
\newcommand{\cloglog}{\texttt{cloglog}}
\newcommand{\Data}{\mathcal D}
\newcommand{\DP}{\operatorname{DP}}
\newcommand{\Dirichlet}{\operatorname{Dirichlet}}
\newcommand{\E}{\mathbb{E}}
\newcommand{\Exp}{\operatorname{Exp}}
\newcommand{\Gam}{\operatorname{Gam}}
\newcommand{\logit}{\operatorname{logit}}
\newcommand{\Leaves}{\mathcal{L}}
\newcommand{\sM}{\mathcal M}
\newcommand{\sQ}{\mathcal Q}
\newcommand{\Normal}{\operatorname{Normal}}
\newcommand{\TExp}{\operatorname{TExp}}
\newcommand{\Tree}{\mathcal T}
\newcommand{\Uniform}{\operatorname{Uniform}}
\newcommand{\Var}{\operatorname{Var}}
\newcommand{\R}{\mathbb R}
\newcommand{\scriptX}{\mathscr X}
\newtheorem{theorem}{Theorem}
\newtheorem{lemma}{Lemma}
\theoremstyle{definition}
\newtheorem{remark}{Remark}
\begin{document}

\title{A Unified Bayesian Nonparametric Framework for Ordinal, Survival, and Density
  Regression Using the Complementary Log-Log Link}
\date{}

\author{Entejar Alam\thanks{\texttt{entejar@utexas.edu}} \ and
  Antonio R. Linero\thanks{\texttt{antonio.linero@austin.utexas.edu}}}

\maketitle


\begin{abstract}
  In this work, we develop applications of the complementary log-log (\cloglog) link to problems in Bayesian nonparametrics. Although less commonly used than the probit or logit links, we find that the \cloglog\ link is computationally and theoretically well-suited to several commonly used Bayesian nonparametric methods. Our starting point is a Bayesian nonparametric model for ordinal regression. We first review how the \cloglog\ link uniquely sits at the intersection of the \emph{cumulative link} and \emph{continuation ratio} approaches to ordinal regression. Then, we develop a convenient computational method for fitting these ordinal models using Bayesian additive regression trees. Next, we use our ordinal regression model to build a Bayesian nonparametric stick-breaking process and show that, under a proportional hazards assumption, our stick-breaking process can be used to construct a weight-dependent Dirichlet process mixture model. Again, Bayesian additive regression trees lead to convenient computations. We then extend these models to allow for Bayesian nonparametric survival analysis in both discrete and continuous time. Our models have desirable theoretical properties, and we illustrate this analyzing the posterior contraction rate of our ordinal models. Finally, we demonstrate the practical utility of our \cloglog\ models through a series of illustrative examples.

  \vspace{1em}

  \noindent \textbf{Keywords:}
  Bayesian additive regression trees;
  Decision trees;
  Density regression;
  Ordinal regression;
  Markov chain Monte Carlo;
  Stick-breaking process;
  Survival analysis
\end{abstract}

\doublespacing

\section{Introduction}

Logistic and probit regression models have been important tools in Bayesian modeling \citep[][Chapter 16]{gelman2013bayesian}, both for modeling binary data and as building blocks for more complicated models. Examples of interesting Bayesian nonparametric applications of these models include the logistic \citep{ren2011logistic} and probit \citep{rodriguez2011nonparametric} stick-breaking processes, discrete (or continuous) time hazard models \citep{sparapani2016nonparametric}, and ordinal regression models based on either cumulative links \citep{albert1993bayesian} or continuation-ratios \citep[][Chapter 8]{agresti2013categorical}. The standard probit and logit models for binomial data set $Y_i \sim \Binomial(n_i, \mu_i)$ with $g(\mu_i) = X_i^\top \beta$, where $g(\mu) = \logit(p)$ for the logistic model and $g(\mu) = \Phi^{-1}(p)$ for the probit model.

Despite its potential advantages, the complementary log-log (\cloglog) link function $g(\mu) = \log\{-\log(1 - \mu)\}$ has received less attention compared to the more commonly used probit and logit links. Rather, the \cloglog\ link has typically been applied in relatively narrow, subject-specific contexts. The \cloglog\ link is particularly useful when an \emph{asymmetric} link function is required, as it captures skewed relationships between predictors and outcomes. For instance, in toxicological studies, increasing the dosage of a toxin beyond a certain threshold may lead to a rapid increase in the risk of death, while reducing an already low dose may have minimal impact on the risk.

One possible reason for the limited popularity of the \cloglog\ link among Bayesian practitioners is the lack of convenient data augmentation schemes. While significant progress has been made in developing effective data augmentation strategies for logistic and probit regression models \citep{polson2013bayesian}, similar advancements for the \cloglog\ link are lacking. Although a latent variable representation involving Gumbel distributions is known \citep{agresti2013categorical}, augmenting data with Gumbel latent variables does not straightforwardly simplify computations.

In this work, we develop several Bayesian nonparametric models that exploit the unique properties of the \cloglog\ link. By doing so, we construct methods that are computationally simpler and more robust than analogous approaches using the probit or logit links. Specifically, we consider the following applications:
\begin{itemize}

\item \textbf{Ordinal regression models:} Fitting cumulative link ordinal regression models with the \cloglog\ link entirely avoids the need to impose constraints on the breakpoints of the model. All updates are either conditionally conjugate or based on standard updates for Bayesian additive regression trees (BART) parameters. In contrast, probit or logit-based methods must contend with complex dependencies in the posterior distribution of the cutpoints \citep{cowles1996accelerating}.

\item \textbf{Stick-breaking processes:}  We introduce the \emph{proportional hazards stick-breaking process} (PHSBP) and the \emph{non-proportional hazards stick-breaking process} (NPHSBP) as priors for families of random probability measures $\{F_x : x \in \mathcal X\}$. A key advantage of these new processes relative to, for example, the probit stick breaking process \citep{rodriguez2011nonparametric} is that we can fit mixture models based these stick-breaking processes by augmenting only a single latent variable per individual.

\item \textbf{Survival Models:} The same computational advantages apply to discrete-time survival models with a proportional hazards structure and to continuous-time proportional hazards models with a piecewise-constant hazard function. We illustrate this by developing both proportional hazards and non-proportional hazards survival models.

\end{itemize}
To obtain these results, we leverage a novel truncated-exponential data augmentation scheme that integrates uniquely well with Bayesian additive regression trees \citep[BART,][]{chipman2010bart} methods. The use of BART is critical, as other nonparametric priors do not lead to conditional conjugacy in this context.

We demonstrate the utility of these methods through a variety of simulated and real data analyses. Our ordinal data models are evaluated using data from the Medical Expenditure Panel Survey (MEPS), where we show that, in addition to being easier to implement, survey questions pertaining to depression are better modeled with the \cloglog\ link than with the probit or logit links. Using the same data source, we employ an NPHSBP mixture model to replicate the analysis from \citet{li2023adaptive}, examining how the density function of body mass index varies across subpopulations based on education levels. Finally, we apply a proportional hazards survival model to data described by \citet{henderson2002modeling} to infer the effects of age and material deprivation (measured by the Townsend index) on survival from leukemia.

We also briefly describe the theoretical properties of our models. Using the
relationship between the \cloglog\ link and the Gumbel distribution, we are able to derive posterior concentration rates for our ordinal regression model; similar results are expected to hold for the other models we consider. These rates are minimax-optimal up to logarithmic factors. We also show that a special case of the PHSBP corresponds to a weight-dependent Dirichlet process \citep[DDP,][]{maceachern2000dependent}.

The remainder of the paper is organized as follows. In Section~\ref{sec:BART}, we provide an overview of the Bayesian Additive Regression Trees (BART) framework. In Section~\ref{sec:PH}, we develop the methods underlying our use of the \cloglog\ link, providing both theoretical insights and practical implementation details. We also describe non-proportional extensions of our models in Section~\ref{sec:PH}, which essentially correspond to fully nonparametric models, and discuss some of the properties of the PHSBP and NPHSBP models. Section~\ref{sec:posterior-concentration} establishes posterior concentration rates for our ordinal model. In Section~\ref{sec:illustrations}, we empirically evaluate our methods using both synthetic and real datasets. We conclude with a discussion in Section~\ref{sec:discussion}.


\section{Review of Bayesian Additive Regression Trees}
\label{sec:BART}

Before introducing our methodology, we provide a brief overview of the Bayesian Additive Regression Trees (BART) approach to nonparametric modeling. Introduced by \citet{chipman2010bart}, BART is a Bayesian nonparametric framework that combines the flexibility of decision tree ensembles with the rigorous probabilistic foundation of Bayesian statistics. In BART, a function of interest, $r(x)$, is expressed as a sum of \emph{regression trees}:
\[
  r(x) = \sum_{t=1}^T g(x; \Tree_t, \sM_t),
\]
where \( \Tree_t \) represents the structure of the \( t^{\text{th}} \) decision tree, comprising branch nodes (\( \Branches(\Tree_t) \)) and leaf nodes (\( \Leaves(\Tree_t) \)), and \( \sM_t = \{\mu_{t\ell} : \ell \in \Leaves(\Tree_t)\} \) denotes the associated \emph{leaf node} parameters. The function \( g(x; \Tree_t, \sM_t) \) is a step function that assigns the value \( \mu_{t\ell} \) to \( x \) if it falls within leaf \( \ell \) of \( \Tree_t \). Figure~\ref{fig:tree} provides a schematic illustration of a regression tree \( g(x; \Tree_t, \sM_t) \), depicted both as a decision tree and as a corresponding step function.

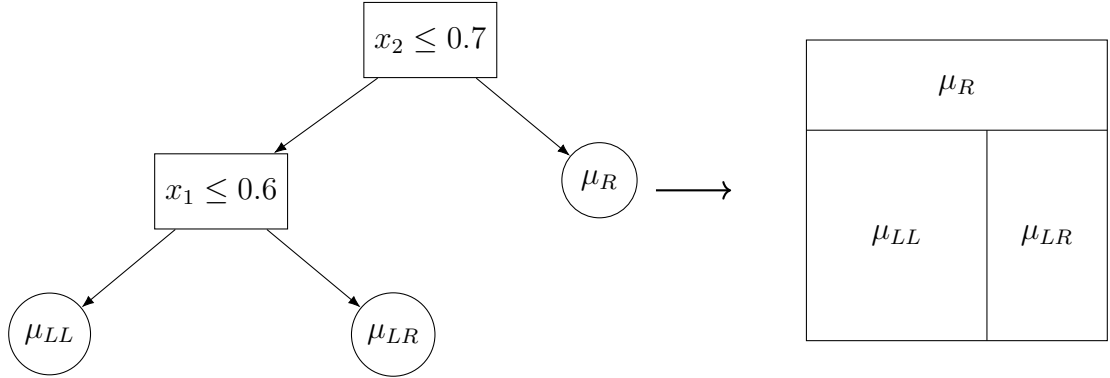
\begin{figure}
  \centering
    \begin{tikzpicture}
      \tikzset{every node/.style={draw,rectangle,minimum size=1cm}}
      \node (root) {$x_2 \leq 0.7$};
      \node[below left=of root] (left) {$x_1 \leq 0.6$};
      \node[below right=of root,circle] (right) {$\mu_R$};
      \node[below left=of left,circle] (leftleft) {$\mu_{LL}$};
      \node[below right=of left,circle] (leftright) {$\mu_{LR}$};

      \draw[-{Latex}] (root) -- (left);
      \draw[-{Latex}] (root) -- (right);
      \draw[-{Latex}] (left) -- (leftleft);
      \draw[-{Latex}] (left) -- (leftright);

      \draw[->, thick] (3,-2) -- (4,-2);

      \begin{scope}[shift={(5,-4)}]
          \draw (0,0) rectangle (4,4);
          \draw (0,2.8) -- (4,2.8);
          \draw (2.4,0) -- (2.4,2.8);
          \node[draw = none] at (2, 3.4) {$\mu_R$};
          \node[draw = none] at (1.2, 1.4) {$\mu_{LL}$};
          \node[draw = none] at (3.2, 1.4) {$\mu_{LR}$};
      \end{scope}

  \end{tikzpicture}
  \caption{Schematic of a regression tree in terms of a decision tree (left) and a step function on the predictor space $[0,1]^2$ (right). The tree splits the predictor space based on rules \(x_2 \leq 0.7\) and \(x_1 \leq 0.6\), resulting in three regions with terminal node values \(\mu_R\), \(\mu_{LL}\), and \(\mu_{LR}\)}
  \label{fig:tree}
\end{figure}


BART builds on earlier tree-based methods, such as random forests \citep{breiman2001random} and gradient boosting \citep{friedman2001greedy}, by placing a prior distribution, \( \pi_\Tree(\Tree) \, \pi_\sM(\sM_t \mid \Tree_t) \), on both the tree structures and the leaf node predictions. These priors are specifically designed to regularize the estimation process and mitigate overfitting. The decision trees in the ensemble are assumed to be a-priori independent and are governed by three components:  (i) a \emph{branching process} prior on the on the depth and shape of the trees that penalizes tree complexity; (ii) a prior on how the decision trees are constructed, where predictor $j$ is selected with some probability $s_j$ and then a splitting location is sampled uniformly; and (iii) a prior $\pi_\mu$ on the leaf node parameters, which is typically chosen to have prior mean $0$ and variance $\sigma^2_\mu \propto 1/T$. The scaling of the variance $\sigma^2_\mu$ ensures that individual trees contribute modestly to the overall fit. These priors collectively allow BART to balance flexibility with regularization, making it robust and effective for nonparametric regression tasks.

\begin{paragraph}{BART for Nonparametric Regression}
  \citet{chipman2010bart} originally proposed BART as a method for nonparametric regression, where the model assumes \( Y_i \sim \Normal\{r(X_i), \sigma^2\} \). They developed a \emph{Bayesian backfitting} algorithm to perform inference using Markov chain Monte Carlo (MCMC). This algorithm operates by sequentially sampling \( \Tree_t \) from the conditional distribution \( [\Tree_t \mid \Tree_{-t}, \sM_{-t}, \Data] \) using a Metropolis-Hastings step, followed by sampling \( \sM_t \) from \( [\sM_t \mid \Tree_t, \Tree_{-t}, \sM_{-t}, \Data] \). The implementation of these sampling steps depends heavily on the specification of a conjugate prior \( \pi_\mu(\mu) \), which facilitates efficient computation. However, it is also possible to use non-conjugate priors by employing reversible jump algorithms, as described by \citet{linero2024generalized}.
\end{paragraph}

\begin{paragraph}{BART for Non-Normal Likelihoods via Data Augmentation}
  BART has been extended to accommodate non-normal likelihoods in various ways. For instance, \citet{chipman2010bart} proposed a probit extension for classification tasks, using the data augmentation scheme of \citet{albert1993bayesian}. Of particular relevance to this work are the survival analysis models developed by \citet{linero2022bayesian} and \citet{sparapani2016nonparametric}, which demonstrate how BART can be applied to model time-to-event data both with and without Cox's proportional hazards assumption. \citet{sparapani2016nonparametric} employ a discrete-time survival model of the form
  \[
    \Pr(T_i = t \mid T_i \ge t, X_i = x) = \Phi\{r(t, x)\}, \qquad \text{for } t \in \{t_1, \ldots, t_K\},
  \]
  where \( r(t, x) \) is assigned a BART prior. Inference for this model can proceed using the data augmentation techniques described by \citet{albert1993bayesian}, although the efficiency of this approach is limited by the need to augment \( k \) latent variables whenever \( T_i = t_k \). When this framework is adapted for ordinal data rather than survival data, such models are referred to as \emph{continuation ratio models} \citep{agresti2013categorical}. BART with data augmentation has also been applied to fit \emph{cumulative probit} models for ordinal data, which specify
  \[
    \Pr(Y_i = k \mid X_i = x) = \Phi\{c_k - r(x)\} - \Phi\{c_{k-1} - r(x)\},
  \]
  where \( -\infty = c_0 < c_1 < \cdots < c_{K-1} < c_K = \infty \) \citep{albert1993bayesian, muller2007spatially}. While cumulative probit models impose more restrictive assumptions than continuation ratio models, they are sometimes preferred because of their simpler interpretation in terms of latent continuous variables \( Z_i \), where \( Z_i \sim \Normal\{r(X_i), 1\} \) and \( Y_i = k \) if \( Z_i \in [c_{k-1}, c_k) \). However, fitting cumulative probit models is more challenging due to the requirement that the thresholds \( c_k \) be constrained to increase monotonically \citep{cowles1996accelerating}.
\end{paragraph}

\begin{paragraph}{BART for Non-Normal Likelihoods Without Data Augmentation}
  Interestingly, simple Gibbs samplers are also available non-normal likelihoods in certain situations that completely avoid latent variables by simply modifying the leaf node prior $\pi_\mu(\mu)$ \citep{murray2021log}. A useful trick is to use a \emph{log-gamma} prior $\log \Gam(a, b)$, which allows us to fit proportional hazards regression models with Cox's partial likelihood \citep{linero2022bayesian}, gamma regression models with a fixed shape parameter \citep{linero2020semiparametric}, and Poisson process models \citep{lamprinakou2023bart}, among others. The values of $a$ and $b$ are chosen so that $\E(\mu_{t\ell}) = 0$ and $\Var(\mu_{t\ell}) = \sigma^2_\mu$, which can be shown to imply that $\psi(a) = \log b$ and $\psi'(a) = \sigma^2_\mu$, while $\psi(x) = \frac{d}{dx} \log \Gamma(x)$ is the digamma function. In this work, we set $\sigma_\mu = 1.5 / \sqrt T$ to mimic the prior specification for the original BART prior.
\end{paragraph}

\section{Inference for Complementary Log-Log Models}
\label{sec:PH}

The models we consider are unified in the sense that they all have representations in terms of latent exponential (or piecewise-exponential) random variables. Accordingly, we begin with a general description of BART inference with an exponential outcome
\begin{math}
  [E_i \mid X_i = x] \sim \Exp(e^{r(x)}),
\end{math}
and explain how inference for this model can be used to fit models that use the \cloglog\ link. Letting $Z_i = \log E_i$, we equivalently have the regression model $Z_i = -r(X_i) + \epsilon_i$ where $\epsilon_i$ follows a Gumbel distribution with density $h(\epsilon) = \exp(\epsilon - e^\epsilon)$.

Interestingly, the model for $Z_i$ is amenable to exactly the same Bayesian backfitting algorithm introduced by \citet{chipman2010bart}, provided that we use a \emph{log-gamma} prior $\mu_{t\ell} \sim \log \Gam(a, b)$ rather than the usual normal distribution $\mu_{t\ell} \sim \Normal(0, \sigma^2_\mu)$. Specifically, if we define the \emph{partial residuals} associated to $(\Tree_t, \sM_t)$ as
\begin{math}
  R_{it} = Z_i + \sum_{j \ne t} g(X_i; \Tree_j, \sM_j)
\end{math}
then the conditional likelihood of $(\Tree_t, \sM_t)$ given all the other trees can be written as
\begin{align*}
  L(\Tree_t, \sM_t) =
  \prod_{\ell \in \Leaves(\Tree_t)} \prod_{i: X_i \leadsto \ell} \exp\left( R_{it} + \mu_{t\ell} - e^{\mu_{t\ell}} e^{R_{it}} \right)
  \propto \prod_{\ell \in \Leaves(\Tree_t)}
  \exp(A_{t\ell} \mu_{t\ell} - B_{t\ell} e^{\mu_{t\ell}}),
\end{align*}
where $[X_i \leadsto \ell]$ denotes that $X_i$ is associated to leaf node $\ell$ of $\Tree_t$, $A_{t\ell} = \sum_{i: X_i \leadsto \ell} 1$, and $B_{t\ell} = \sum_{i: X_i \leadsto \ell} e^{R_{it}}$. Under the prior $\mu_{t\ell} \sim \log \Gam(a, b)$, we can therefore leverage conjugacy to compute the \emph{integrated likelihood}
\begin{align*}
  L(\Tree_t) = \int L(\Tree_t, \sM_t) \, \prod_{\ell \in \Leaves(\Tree_t)} \pi_\mu(\mu_{t\ell}) \ d\mu_{t\ell}
  \propto \prod_{\ell \in \Leaves(\Tree_t)} \frac{b^a}{\Gamma(a)} \times \frac{\Gamma(a + A_{t\ell})}{(b + B_{t\ell})^{a + A_{t\ell}}},
\end{align*}
and the full conditional $\mu_{t\ell} \sim \log \Gam(a + A_{t\ell}, b + B_{t\ell})$. Following \citet{chipman2010bart}, these computations are sufficient to construct a Bayesian backfitting algorithm for the Gumbel regression model for $Z_i$.

To see how this is useful, suppose we want to model $[Y_i \mid X_i] \sim \Bernoulli(p_i)$ with the \cloglog\ link $g(p_i) = \log\{-\log(1 - p_i)\} = r(X_i)$, or equivalently $p_i = 1 - \exp(-e^{r(X_i)})$. Noting that $G(t) = 1 - \exp(-t e^{r(x)})$ is the cumulative distribution function (cdf) of an $\Exp(e^{r(x)})$ random variable, we see that this is equivalent to setting $Y_i = 1$ if $E_i \le 1$ and $Y_i = 0$ if $E_i > 1$ where $E_i \sim \Exp(e^{r(X_i)})$. We can then perform data augmentation to reduce inference for the \cloglog\ link to the exponential model by augmenting $[E_i \mid Y_i = 1, X_i] \sim \TExp(e^{r(X_i)}, 0, 1)$ and $[E_i \mid Y_i = 0, X_i] \sim \TExp(e^{r(X_i)}, 1, \infty)$, where $\TExp(\beta, a, b)$ denotes an $\Exp(\beta)$ random variable truncated to the interval $(a,b)$.

\begin{remark}
  Inference can also proceed with the log-log link $g(p) = \log\{-\log(p)\}$, although \citet{mccullagh1998generalized} note that this link is less popular due to it having a heavy right tail rather than a heavy left tail. To use the log-log link, we simply switch the coding of successes and failures. A similar trick can also be used with ordinal models, where the cumulative log-log link can be specified by reversing the ordering of the categories.
\end{remark}

\begin{remark}
  Strictly speaking, it is possible to perform Bayesian backfitting for the \cloglog\ binary model by only augmenting variables with $Y_i = 1$, as the likelihood contribution $\exp(-e^{r(X_i)})$ turns out not to obstruct the conjugacy of the log-gamma prior. Avoiding augmenting $E_i$ when $Y_i = 0$ can be desirable computationally if there are very few successes, as this limits the number of $E_i$'s we need to augment.
\end{remark}

\subsection{Proportional Hazards Ordinal Regression Model}
\label{sec:ordinal-data-models}

There are two commonly used frameworks for modeling ordinal outcome data: \emph{cumulative link} models and \emph{continuation ratio} (or \emph{sequential regression}) models. Interestingly, in the special case of the \cloglog\ link, these two frameworks coincide. This model is often referred to as a (grouped) \emph{proportional hazards} model because it satisfies the relationship
\begin{math}
\Pr(Y_i > k \mid X_i = x) = \{S_0(k)\}^{e^{r(x)}},
\end{math}
where $S_0(k)$ represents a baseline survival function. This formulation generalizes the structure of Cox's proportional hazards model in survival analysis to discrete data \citep{suresh2022survival}; when $r(x)$ has a BART prior, we refer to this model as the PHOBART (proportional hazards ordinal BART) model.





The cumulative link formulation of the proportional hazards model arises from a latent utility $Z_i$, which is categorized into an ordinal outcome based on a set of ordered \emph{cutpoints}, $-\infty = c_0 < c_1 < \cdots < c_{K-1} < c_K = \infty$. The observed ordinal outcome $Y_i$ corresponds to the number of thresholds $Z_i$ exceeds, formally given by $Y_i = \sum_k 1(Z_i > c_k)$. To derive the proportional hazards model, we define $Z_i = -r(X_i) + \epsilon_i$ with $\log \epsilon_i \sim G$ with $G(t) = 1 - \exp(-e^t)$ the Gumbel cdf. This leads to the likelihood and survival functions:
\begin{align*}
  \Pr(Y_i = k \mid X_i = x)
  &= G\{c_k + r(x)\} - G\{c_{k-1} + r(x)\}
  = \exp(-e^{c_{k - 1} + r(x)}) - \exp(-e^{c_k + r(x)}) \\
  \Pr(Y_i > k \mid X_i = x)
  &= \exp(-e^{c_k + r(x)}) = \big\{\exp(-e^{c_k})\big\}^{e^{r(x)}} = S_0(k)^{e^{r(x)}}.
\end{align*}
This verifies that the proportional hazards characterization of the model. Figure~\ref{fig:cumulative-link} gives a visual comparison of this model with the cumulative probit model.

\begin{figure}
  \centering
  \includegraphics[width=1\textwidth]{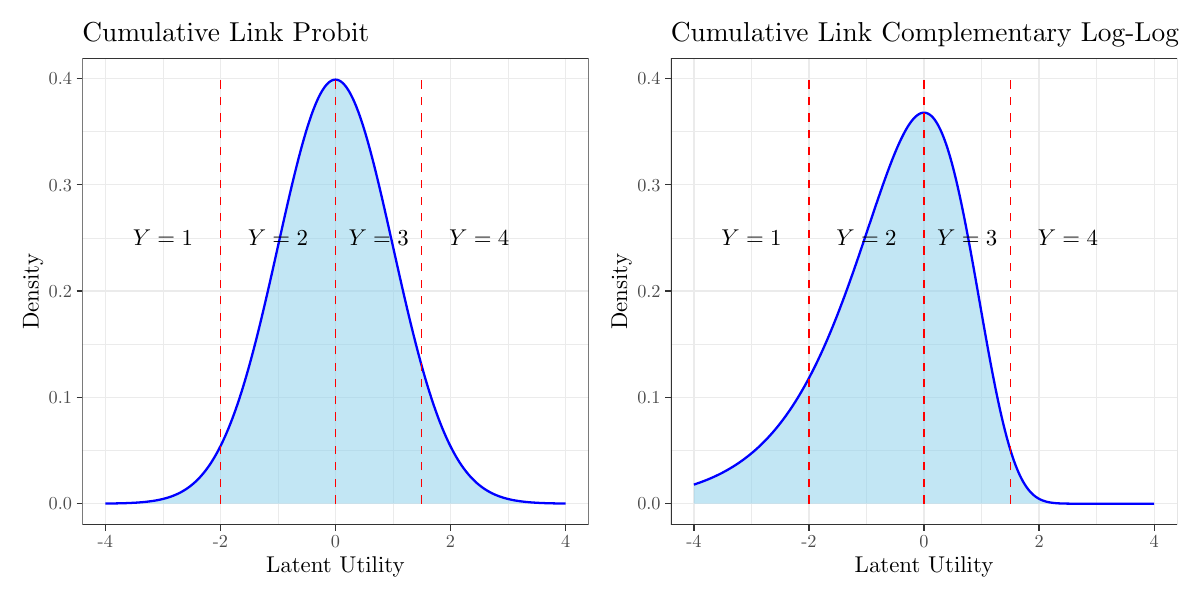}
  \caption{Visualization of the cumulative link model for ordinal data using the probit (left) and complementary log-log (right) link functions. The latent utility $Z_i$ is depicted along with the cutpoints $\gamma_k$ that determine the observed ordinal outcomes $Y_i$.}
  \label{fig:cumulative-link}
\end{figure}


The continuation ratio approach, in contrast, models the probability of ``stopping'' at a given category as $\Pr(Y_i = k \mid Y_i \ge k, X_i = x) = G\{\gamma_k + r(x)\}$ so that
\begin{align}
  \label{eq:continuation-ratio}
  \Pr(Y_i = k \mid X_i = x) &= G\{\gamma_k + r(x)\} \prod_{j < k} [1 - G\{\gamma_j + r(x)\}].
\end{align}
When $G^{-1}(\cdot)$ corresponds to the \cloglog\ link, \eqref{eq:continuation-ratio} can be rewritten as
\begin{align*}
  \exp\left\{-\left(\sum_{j < k} e^{\gamma_j}\right) e^{-r(x)}\right\} -
  \exp\left\{-\left(\sum_{j \le k} e^{\gamma_j}\right) e^{-r(x)}\right\}
    &= \exp\left(-e^{c_{k - 1} + r(x)}\right) - \exp(-e^{c_k + r(x)})
  \\&= G\{c_k + r(x)\} - G\{c_{k-1} + r(x)\},
\end{align*}
where $c_k = \log \sum_{j \le k} e^{\gamma_j}$. This establishes that the continuation ratio and cumulative link models coincide under the \cloglog\ link.

Unlike the probit and logit link functions, we can leverage the correspondence between the continuation ratio and cumulative link models to simplify inference on the cutpoints $c_k$. As shown in Web Appendix~\ref{sec:gibbs-sampler-ordinal}, the continuation ratio parametrization enables straightforward Gibbs sampling updates for the $\gamma_k$'s when a $\log \Gam(a_\gamma, b_\gamma)$ prior is used. The cutpoints can then be easily obtained by transforming the $\gamma_k$'s via $c_k = \log \sum_{j \le k} e^{\gamma_j}$. In contrast, efficient Gibbs samplers for the probit and logit link functions must explicitly account for the ordering constraint on the $c_k$'s, adding complexity to the inference process.

\subsection{Fully-Nonparametric Ordinal Model}

The cumulative link model, while convenient, imposes a parametric structure that may not always be appropriate. As an alternative, we propose a fully nonparametric continuation ratio model, defined as
\begin{math}
  \Pr(Y = k \mid Y_i \ge k, X_i = x) = 1 - \exp(-e^{\gamma_k + r(x, k)}),
\end{math}
where the nonparametric function $r(x, k)$ depends on both $x$ and $k$. This flexible structure allows the model to adapt to any data-generating mechanism for ordinal data. When $r(x, k)$ has a BART prior, we refer to this model as the NPHOBART (non-proportional hazards ordinal BART) model.

This approach also conveniently allows us to shrink the model toward the cumulative link structure by using the automatic relevance determination prior of \citet{linero2018bayesian}. Let $s = (s_1, \ldots, s_P, s_{P + 1})$ represent the prior probabilities that a given splitting rule in the ensemble uses predictor $j$ to construct a split, where $j = P + 1$ corresponds to a split on the category $k$. \citet{linero2018bayesian} proposed a $\Dirichlet(\alpha, \ldots, \alpha)$ prior for $s$, with $\alpha \propto P^{-1}$, to facilitate variable selection in high-dimensional settings. To adapt this to our setting, we instead use $s \sim \Dirichlet(1, \ldots, 1, w)$, where $w$ is a small positive constant. This formulation allows the prior to favor not splitting on the category $k$, effectively reverting the model to the proportional hazards structure. This hierarchical framework provides a mechanism for balancing flexibility and parsimony, enabling the model to adapt to the data while incorporating a natural shrinkage toward the simpler proportional hazards model when appropriate.

\subsection{Data Augmentation for Ordinal Models}
\label{sec:data-augmentation}

Data augmentation for the cumulative link model can proceed similarly to binary regression models. Starting from \eqref{eq:continuation-ratio}, we note that if \( Z \sim \TExp(e^{\gamma_k + r(x)}, 0, 1) \), then \( Z \) has the density
\[
  f(z \mid \gamma, r) = \frac{\exp\{\gamma_k + r(x) - z e^{\gamma_k + r(x)}\}}{G\{\gamma_k + r(x)\}}.
\]
Truncated exponential random variables can be sampled efficiently using the probability integral transform \citep[][Section 2.1]{casella2002statistical}. For each \( Y_i < K \), augmenting \( Z_i \sim \TExp(e^{\gamma_{Y_i} + r(X_i)}, 0, 1) \) leads to the augmented likelihood for observation \( i \):
\[
  \exp\left( 1(Y_i < K) \{\gamma_{Y_i} + r(X_i) - Z_i e^{\gamma_{Y_i} + r(X_i)}\} - \sum_{k = 1}^{Y_i - 1} e^{\gamma_k + r(X_i)} \right),
\]
which can be rewritten as
\[
  \exp\left( \{\gamma_{Y_i} + r(X_i)\} 1(Y_i < K) - e^{r(X_i)} \left[ \sum_{k = 1}^{Y_i - 1} e^{\gamma_k} + Z_i e^{\gamma_{Y_i}} 1(Y_i < K) \right] \right).
\]
This form matches the Poisson-gamma likelihood described by \citet{hill2020bayesian}, making it amenable to efficient inference using a Bayesian backfitting algorithm.

A similar strategy applies to the fully nonparametric model, which replaces \( r(x) \) with \( r(x, k) \). In this case, we augment \( Z_i \sim \TExp(e^{\gamma_{Y_i} + r(X_i, Y_i)}, 0, 1) \) for \( Y_i < K \). Further implementation details are provided in Web Appendix~\ref{sec:gibbs-sampler-ordinal}.

\subsection{Stick Breaking Processes}
\label{sec:stick}

In addition to its utility for modeling ordinal data, the \cloglog\ link can be leveraged to construct a \emph{stick-breaking process} for Bayesian nonparametric modeling of conditional random measures. Specifically, we consider conditional distributions of the form
\begin{align}
  \label{eq:stick-break}
  F_x(dy) = \sum_{k = 1}^\infty \varpi_k(x) \, \delta_{\zeta_k(x)} \qquad \text{where} \qquad
  \varpi_k(x) = \{1 - \exp(-e^{\gamma_k + r(x,k)})\} \prod_{j < k} \exp(-e^{\gamma_j + r(x,j)}),
\end{align}
with a modified BART prior placed on \( r(x,k) \). Here, the mixture components \( \zeta_k(x) \) are sampled independently from a base distribution \( H(d\zeta) \). This framework parallels the structure of probit \citep{rodriguez2011nonparametric} and logit \citep{rigon2021tractable} stick-breaking processes but with the \cloglog\ link.

We consider two variants of \eqref{eq:stick-break}. The simpler variant assumes \( r(x,k) \equiv r(x) \), meaning the nonparametric function \( r \) does not include an interaction between \( x \) and \( k \). We refer to this simplified random distribution as the \emph{proportional hazards stick-breaking process} (PHSBP). Notably, the PHSBP can be cast as a type of dependent Dirichlet process \citep[DDP,][]{maceachern2000dependent} with covariate dependent weights.

\begin{theorem}
  \label{prop:dp-connection}
  Suppose that $\gamma_k \sim \log \Gam(1, 1)$ and let $F_x$ be defined as in \eqref{eq:stick-break}. Then conditional on $x$ and $r(x)$ we have
  \begin{math}
    F_x \sim \DP(e^{-r(x)}, H)
  \end{math}
  where $\DP(\alpha, H)$ denotes a Dirichlet process with concentration parameter $\alpha$ and base distribution $H$.
\end{theorem}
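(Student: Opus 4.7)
The plan is to reduce the claim to Sethuraman's stick-breaking construction of the Dirichlet process. The first step is to recognize that, writing $V_k = 1 - \exp(-e^{\gamma_k + r(x)})$, the weights take the familiar stick-breaking form
\begin{align*}
  \varpi_k(x) = V_k \prod_{j < k} (1 - V_j),
\end{align*}
since $1 - V_j = \exp(-e^{\gamma_j + r(x)})$. Because $r(x,k) \equiv r(x)$ in the PHSBP, the $V_k$'s are conditionally independent given $x$ and $r(x)$. Hence the conclusion will follow from Sethuraman's theorem once I identify the conditional distribution of each $V_k$.

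Second, I would verify that $V_k \mid x, r(x) \sim \Beta(1, e^{-r(x)})$ i.i.d.\ across $k$. Since $\gamma_k \sim \log \Gam(1,1)$ means $e^{\gamma_k} \sim \Gam(1,1) = \Exp(1)$, the rescaled variable $T_k := e^{\gamma_k + r(x)} = e^{r(x)} e^{\gamma_k}$ is, conditional on $r(x)$, exponential with rate $\alpha := e^{-r(x)}$. A one-line change of variables ($V_k = 1 - e^{-T_k}$, so $T_k = -\log(1 - V_k)$ with Jacobian $1/(1 - V_k)$) then gives density
\begin{align*}
  f_{V_k}(v) = \alpha \, (1-v)^{\alpha - 1}, \qquad v \in (0,1),
\end{align*}
which is exactly the $\Beta(1, \alpha)$ density. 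The independence of the $V_k$'s is inherited from the independence of the $\gamma_k$'s.

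Third, I would invoke \citet{sethuraman1994constructive}: if $V_k \stackrel{\text{iid}}{\sim} \Beta(1,\alpha)$ and atoms $\zeta_k \stackrel{\text{iid}}{\sim} H$ independently of the $V_k$'s, then $\sum_k V_k \prod_{j<k}(1 - V_j)\, \delta_{\zeta_k} \sim \DP(\alpha, H)$. Applying this conditionally on $x$ and $r(x)$ with $\alpha = e^{-r(x)}$ yields the claim $F_x \mid x, r(x) \sim \DP(e^{-r(x)}, H)$.

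There is no serious obstacle here. The only technical check is the distributional identity in step two, which is just a one-variable transformation; the rest is bookkeeping and a direct appeal to Sethuraman's representation. The only subtlety worth flagging is that the argument is \emph{conditional} on $x$ and $r(x)$, since the concentration parameter $e^{-r(x)}$ is itself random under the BART prior, and that the atoms $\zeta_k$ must be drawn independently of the $\gamma_k$'s (and hence of the weights) for Sethuraman's result to apply in the stated form.
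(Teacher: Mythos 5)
Your proposal is correct and follows essentially the same route as the paper's own proof sketch: both reduce the claim to Sethuraman's construction by observing that $e^{\gamma_k + r(x)} \sim \Exp(e^{-r(x)})$ conditional on $r(x)$, so the stick-breaking fractions are i.i.d.\ $\Beta(1, e^{-r(x)})$. Your version simply spells out the change-of-variables computation and the independence/conditioning caveats that the paper leaves implicit.
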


\begin{proof}[Proof sketch]
  For any exponential random variable $E_i \sim \Exp(\lambda)$ with rate $\lambda$, we have $e^{-Z_i} \sim \Beta(\lambda, 1)$. Noting that $e^{\gamma_k + r(x)} \sim \Exp(e^{-r(x)})$, our setup then matches the Sethuramann stick-breaking construction of the Dirichlet process \citep{sethuraman1994constructive}.
\end{proof}


Unfortunately, while the weights of the PHSBP are covariate-dependent, Theorem~\ref{prop:dp-connection} implies that this dependence is limited to allowing the concentration parameter of the resulting Dirichlet process to vary with \(x\). This type of covariate dependence is of limited practical interest. To address this limitation, we propose the \emph{non-proportional hazards stick-breaking process} (NPHSBP), which introduces an interaction between \(x\) and \(k\). To incorporate the mixture component index \(k\) in a meaningful way, we use splitting rules of the form \( [p(k - 1) \le C_b] \), where \(p(k)\) is the cumulative distribution function (CDF) of any discrete random variable on \( 0, 1, 2, \ldots \), and \(C_b \sim \Uniform(0,1)\). This prioritizes re-weighting the first few mixture components, placing less emphasis on components associated with larger \(k\). In our illustrations, we set \( p(k) \) to be the CDF of a geometric random variable with a success probability of \( 1/3 \).

The NPHSBP can also be related to the truncated stick-breaking processes defined by \citet{ishwaran2001gibbs}, as shown in the following theorem.

\begin{theorem}
  \label{prop:sbp-connection}
  Let \( \mathcal{P}_K(\bm{a}, \bm{b}) \) denote the stick-breaking process defined by \citet{ishwaran2001gibbs} with \( K \) components, where \( \bm{a} \) and \( \bm{b} \) are sequences of positive numbers. Suppose \( \gamma_k \sim \log \Gam(1, 1) \), and let \( F_x \) be defined as in \eqref{eq:stick-break}. Then, conditional on \( x \) and \( r(x,k) \), we have
  \begin{math}
    F_x \sim \mathcal{P}_\infty\{\bm{a}(x), \bm{b}(x)\},
  \end{math}
  where \( a_k(x) = 1 \) and \( b_k(x) = e^{-r(x,k)} \). If the NPHSBP is instead truncated at a finite \( K \), then \( F_x \sim \mathcal{P}_K\{\bm{a}(x), \bm{b}(x)\} \).
\end{theorem}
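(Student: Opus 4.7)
The plan is to recognize the NPHSBP weights as an Ishwaran--James stick-breaking construction in disguise. Define the implicit sticks
\[
V_k(x) \;:=\; 1 - \exp(-e^{\gamma_k + r(x,k)}),
\]
so that the NPHSBP weights in \eqref{eq:stick-break} become $\varpi_k(x) = V_k(x) \prod_{j < k} \{1 - V_j(x)\}$, exactly the stick-breaking form of \citet{ishwaran2001gibbs}. To match the theorem, it then suffices to show that, conditional on $r(\cdot, \cdot)$, the $V_k(x)$ are independent across $k$ with $V_k(x) \sim \Beta\{1, e^{-r(x,k)}\}$.

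For the distributional step, I would work through a short chain of transformations. Since $\gamma_k \sim \log\Gam(1,1)$, we have $e^{\gamma_k} \sim \Exp(1)$, and scaling by the constant $e^{r(x,k)}$ gives $W_k := e^{\gamma_k + r(x,k)} \sim \Exp(e^{-r(x,k)})$. Then applying the one-to-one change of variables $V_k(x) = 1 - e^{-W_k}$ (with Jacobian $dW_k / dV_k = 1/(1-V_k)$) converts the exponential density $e^{-r(x,k)} \exp\{-e^{-r(x,k)} W_k\}$ into the density $e^{-r(x,k)} (1 - V_k)^{e^{-r(x,k)} - 1}$ on $(0,1)$, which is the $\Beta\{1, e^{-r(x,k)}\}$ density. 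Independence of the $V_k(x)$ across $k$ is inherited from the prior independence of the $\gamma_k$.

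Combined with the independent atoms $\zeta_k(x) \sim H$, this exhibits $F_x$ as $\mathcal{P}_\infty\{\bm{a}(x), \bm{b}(x)\}$ with $a_k(x) = 1$ and $b_k(x) = e^{-r(x,k)}$, proving the first claim. For the truncated case, I would observe that truncating the NPHSBP at $K$ means assigning all remaining mass $\prod_{j < K} \exp(-e^{\gamma_j + r(x,j)}) = \prod_{j < K}\{1 - V_j(x)\}$ to the $K$th atom, which is precisely the Ishwaran--James truncation convention of setting $V_K \equiv 1$; the first $K-1$ sticks retain the Beta distributions derived above, giving $F_x \sim \mathcal{P}_K\{\bm{a}(x), \bm{b}(x)\}$.

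The main obstacle is essentially bookkeeping: verifying carefully that the scaling in the exponential distribution and the $1 - e^{-W}$ transformation combine to produce exactly the two-parameter Beta shape with $a = 1$ rather than $b = 1$, and checking that the truncation convention in \citet{ishwaran2001gibbs} aligns with how mass is redistributed when the NPHSBP sum is cut off. Neither of these presents a substantive difficulty, and both are essentially the same calculation invoked in the proof sketch of Theorem~\ref{prop:dp-connection}.
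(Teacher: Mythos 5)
Your proposal is correct and follows essentially the same route the paper takes (the paper leaves Theorem~\ref{prop:sbp-connection} unproved beyond the calculation sketched for Theorem~\ref{prop:dp-connection}): recognize $e^{\gamma_k + r(x,k)} \sim \Exp(e^{-r(x,k)})$, transform to get independent sticks $V_k(x) \sim \Beta(1, e^{-r(x,k)})$, and match the Ishwaran--James construction, with the truncated case handled by the $V_K \equiv 1$ convention. Your change-of-variables bookkeeping is accurate, so nothing further is needed.
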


In our illustrations, we use the NPHSBP to perform density regression with the mixture model
\begin{align*}
  f(y \mid x) = \int \Normal\{y \mid \mu + h(x), \sigma^2\} \, F_x(d\mu, d\sigma)
  = \sum_{k = 1}^\infty \varpi_k(x) \, \Normal\{y \mid \mu_k + h(x), \sigma_k^2\}
\end{align*}
where $h(x)$ is also given a BART prior. By explicitly modeling overall location of $y$ with $h(x)$, this approach encourages the mixture model components $(\varpi_k, \mu_k, \sigma^2_k)$ to focus on capturing changes in the residual distribution.

\paragraph{Inference Via Gibbs Sampling}
Inference for the NPHSBP alternates between (i) updating the mixture components $C_i$ for each of the observations, (ii) updating the parameters $(\gamma_k, r)$ for the weights conditional the $C_i$'s, and (iii) updating the regression parameters $(\mu_k, \sigma_k, h)$ for the outcome. A detailed algorithm Gibbs sampler is given in Web Appendix~\ref{sec:gibbs-density-regression}. The main simplification provided by our use of the \cloglog\ link is that the second step involves augmenting only one latent variable for each observation, as opposed to needing to sample $C_i$ latent variables for the probit/logit stick-breaking processes.

\paragraph{Prior Specification}
Our default prior for the NPHSBP uses default BART priors for $h(x)$ and $r(x)$, with leaf node standard deviations of $1 / \sqrt T$ for both models. Prior to fitting the model, the outcomes $Y_i$ are standardized to have mean $0$ and variance $1$. The base distribution of the measure takes $\mu_k \sim \Normal(0, \sigma_0^2)$ and $\sigma^{-2}_k \sim \Gam(a_\sigma, b_\sigma)$. The hyperparameters are given the priors $a_\sigma \sim \Gam(4, 2)$, $b_\sigma \sim \Gam(4, 2)$, $\sigma_0^{-2} \sim \Gam(1, 1)$.

\subsection{Proportional and Non-Proportional Hazards Survival Models}

The methods we have developed for cumulative link ordinal regression models naturally extend to survival analysis, leading to both proportional hazards (PH) and non-proportional hazards (NPH) models for time-to-event data. When the categories $k = 1,\ldots,K$ are interpreted as indexing a discrete set of possible event times $t_1, \ldots, t_K$ our PH model becomes equivalent to a discrete-time proportional hazards model \citep{kalbfleisch2002statistical, suresh2022survival} with
\begin{math}
  \Pr(T_i \ge t_k \mid X_i = x) = \{S_0(t_k)\}^{e^{r(x)}}.
\end{math}
In the context of survival analysis, our NPH model offers a computationally efficient alternative to the method proposed by \citet{sparapani2016nonparametric}. While both approaches use stick-breaking constructions, Sparapani's method requires augmenting a high-dimensional latent variable for each observation, making it more computationally intensive.

In this section, we present a continuous-time analog of the discrete-time proportional hazards model. Unlike discrete-time survival analysis, our continuous-time survival models avoid the need to augment any latent exponential random variables. A variant of this model adapted to \emph{relative} survival is developed by \citet{basak2024relative}.

We assume the usual survival analysis setup, with $T_i$ as the time to death and $C_i$ as a censoring time. Rather than observing $(T_i, C_i)$, we observe $Y_i = \min(T_i, C_i)$ and $\delta_i = 1(T_i < C_i)$. Under this setup, the proportional hazards model assumes a form for the survival function $S(t \mid x) = \Pr(T_i > t \mid X_i = x)$ of
\begin{math}
  S(t \mid x) = \exp\left\{ -e^{r(x)} \, \int_0^t \lambda_0(u) \, du  \right\}
\end{math}
The function $\lambda_0(t)$ is referred to as the baseline hazard function. This model is referred to as a proportional hazards model because the hazard function $\lambda(t \mid x) = \frac{d}{dt} -\log S(t \mid x)$ factors as $\lambda_0(t) \, e^{r(x)}$. To obtain a flexible model for the baseline hazard function, we specify a piecewise exponential model:
\begin{align*}
  \lambda_0(t) = \sum_{b = 1}^B 1(t_{b-1} \le t < t_b) \, \lambda_b
\end{align*}
where $0 = t_0 < t_1 < \cdots < t_{B-1} < t_B = \infty$. Under this model, assuming that the censoring time $C_i$ is independent of $(T_i, X_i)$, the likelihood for observation $i$ reduces to:
\begin{align}
  \label{eq:likelihood}
  \exp\left\{ \delta_i \, \log \lambda_{B_i}  + \delta_i \, r(X_i) - e^{r(X_i)} \sum_{b = 1}^{B_i - 1} \lambda_b (t_b - t_{b-1}) - e^{r(X_i)} \lambda_{B_i} (Y_i - t_{B_i - 1}) \right\}
\end{align}
where $B_i$ is the indicator for the ``bin'' that $Y_i$ falls in, with $B_i = b$ if $t_{b-1} \le Y_i < t_b$.

The non-proportional hazards variant of this model replaces $r(X_i)$ with $r(X_i, B_i)$, so that the hazard becomes:
\begin{math}
  \lambda(t \mid x) = \sum_{b = 1}^B 1(t_{b-1} \le t < t_b) \, \lambda_b \, \exp\{r(x,b)\}
\end{math}
This modification allows $\lambda(t \mid x)$ to vary with the covariates in a fully nonparametric fashion. For this model, the likelihood is:
\begin{align}
  \label{eq:likelihood-nph}
  \exp\left\{ \delta_i \, \log \lambda_{B_i} + \delta_i \, r(X_i, B_i) - \sum_{b = 1}^{B_i - 1} \lambda_b \, e^{r(X_i, b)} (t_b - t_{b-1}) -  \lambda_{B_i} \, e^{r(X_i, B_i)} (Y_i - t_{B_i - 1})\right\}.
\end{align}

\paragraph{Inference Via Gibbs Sampling}

Both the PH and NPH survival models admit simple Bayesian backfitting algorithms that alternate between: (i) updating the parameters $(\lambda_1, \ldots, \lambda_B)$ and (ii) updating $r(x)$ or $r(x,b)$. To implement this, we use the fact that equations \eqref{eq:likelihood} and \eqref{eq:likelihood-nph} lead to the same Poisson-gamma form described by \citet{hill2020bayesian}. Full details of the algorithm are provided in Web Appendix~\ref{sec:gibbs-survival}. As demonstrated in the Web Appendix, a key difference between the PH and NPH Gibbs samplers is their computational complexity: the PH algorithm allows the likelihood contribution of observation $i$ to tree $t$ to be computed in constant time, while the NPH algorithm requires $O(B_i)$ computations. This provides additional motivation for keeping $B$ modest in the NPH model.

\paragraph{Prior Specifications}
For the $\lambda_b$'s, we specify the prior $\lambda_b \stackrel{\text{iid}}{\sim} \Gam(a_\lambda, b_\lambda)$. In our illustrations, we set $a_\lambda = b_\lambda = 1$ and take $B = N^{1/3}$ (matching the order of the Freedman-Diaconis rule for selecting the number of bins in a histogram). The $t_b$ values are chosen so that an equal proportion of the uncensored event times lies in each bin. For $r(x)$ and $r(x,b)$, we use the same default prior specification as used in the ordinal regression models.

\section{Rates of Posterior Contraction}
\label{sec:posterior-concentration}

One way to compare Bayesian nonparametric priors is by studying their posterior contraction rates. This framework quantifies how fast a posterior distribution $\Pi(dr \mid \text{Data}_n)$ concentrates around an underlying ``true'' nonparametric regression function $r_0(x)$, which can be used to guide prior selection, particularly in high or infinite-dimensional settings. In recent years, there has been a surge of interest in studying posterior contraction rates for BART models \citep{linero2018bayesian, rovckova2020posterior, orlandi2021density, linero2022bayesian, jeong2023art, li2023adaptive, saha2023theory, linero2024generalized}. In this section, we establish posterior contraction rates for the models proposed in Section~\ref{sec:PH}, in particular for the cumulative link \cloglog\ model.

We start with the simplifying assumption that the predictors $X_i$'s are supported on $\scriptX = [0,1]^P$. We impose the following assumption on the unknown parameter $r_0: \scriptX \to \R$. 



\paragraph{Assumption F} The function $r_0$ is $\alpha$-H\"{o}lder smooth for some $\alpha \in (0,1]$, and depends on only $D_0$ coordinates.

\begin{remark}
  Formally, $r_0$ is in a H\"{o}lder ball $H(\alpha, C)$ if $r_0(x)$ and all of its mixed derivatives of order less than $\lfloor \alpha \rfloor$ are bounded by $C$, and all $\lfloor \alpha \rfloor$ ordered derivatives of $r_0$ are Lipschitz with exponent $\alpha - \lfloor \alpha \rfloor$, where $\lfloor \alpha \rfloor$ is the largest integer strictly less than $\alpha$. It is possible to relax Assumption F to allow $\alpha > 1$ (or to instead consider low-order interaction structure) using soft decision trees \citep{linero2018abayesian}. Our proof strategy can also accommodate $P$ diverging nearly-exponentially in $n$, provided that $D_0$ remains bounded, at the expense of an additional variable selection term $\sqrt{ \frac{D_0 \log P}{n}}$ in the error rates, provided that an appropriate sparsity-inducing prior is used. The following theorem summarizes our results.
\end{remark}

\begin{theorem}
  \label{thm:posterior-concentration}
  Suppose $(Y_i, X_i : i = 1,\ldots, n)$ are iid pairs with $X_i \sim F_X$ and $Y_i \in \{1,\ldots,K\}$ with $\Pr(Y_i > k \mid X_i, \gamma_0, r_0) = \exp(-e^{\gamma_{0k} + r_0(x)})$ for $k = 1, \ldots, K-1$. Assume that $r_0$ satisfies Assumption F, the prior $\Pi(dr, d\gamma)$ satisfies Condition P in Web Appendix~\ref{sec:proof-of-theorem}, and the $\gamma_j$'s have a $\log \Gam(a_\gamma, b_\gamma)$ prior. Then there exists a constant $M > 0$ such that
  \begin{align*}
    \Pi\{d(\theta, \theta_0) \geq M\epsilon_n \mid (Y_i, X_i: i = 1, \dots, n)\} \to 0 \ \text{ in probability},
  \end{align*}
  where $\theta = (r, \gamma)$ and $d(\theta, \theta_0) = \int \sum_{y = 1}^K |p_\theta(y \mid x) - p_{\theta_0}(y \mid x)| \ F_X(dx)$ is the $F_X$-integrated total variation distance, $\Pi\{d\theta \mid (Y_i, X_i: i = 1, \dots, n)\}$ is the posterior distribution of $\theta$, and $\epsilon_n = (\log n / n)^{\alpha / (2\alpha + D_0)}$.

\end{theorem}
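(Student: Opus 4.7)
The plan is to apply the Ghosal--Ghosh--van der Vaart framework \citep{ghosal2000convergence}, which requires verifying three conditions at the stated rate $\epsilon_n$: (a) the prior Kullback--Leibler small ball lower bound $\Pi\{\KL(p_{\theta_0}, p_\theta) \le \epsilon_n^2,\ V(p_{\theta_0}, p_\theta) \le \epsilon_n^2\} \ge \exp(-C_1 n \epsilon_n^2)$, where $V$ denotes the KL variation; (b) existence of sieves $\Theta_n$ with entropy $\log N(\epsilon_n, \Theta_n, d) \le C_2 n \epsilon_n^2$; and (c) negligible prior mass on the sieve complement, $\Pi(\Theta_n^c) \le \exp(-C_3 n \epsilon_n^2)$. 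The reduction of the problem to the existing BART literature comes from observing that, with $p_\theta(y \mid x) = \exp(-e^{\gamma_{y-1} + r(x)}) - \exp(-e^{\gamma_y + r(x)})$, both the integrated TV distance $d(\theta, \theta_0)$ and the KL divergence are two-sided comparable to $\|r - r_0\|_\infty + \max_k |\gamma_k - \gamma_{0k}|$ on bounded regions of $(r, \gamma)$, via Taylor expansion of the cloglog transformation.

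For condition (a), the small ball factorizes: it suffices to lower bound $\Pi\{\|r - r_0\|_\infty \le \epsilon_n\} \cdot \prod_k \Pi\{|\gamma_k - \gamma_{0k}| \le \epsilon_n\}$. The BART factor is handled by the Hölder-function approximation bounds of \citet{rovckova2020posterior} (with the sparse-variable-selection refinement of \citet{linero2018bayesian}), which deliver $\Pi\{\|r - r_0\|_\infty \le \epsilon_n\} \ge \exp(-C n \epsilon_n^2)$ at exactly the rate $(\log n/n)^{\alpha/(2\alpha + D_0)}$ for $\alpha$-Hölder $r_0$ depending on $D_0$ coordinates. Each $\gamma$ factor contributes at least $c \epsilon_n$ because the $\log \Gam(a_\gamma, b_\gamma)$ prior has a continuous positive density near $\gamma_{0k}$, and the resulting polynomial factor $\epsilon_n^{K-1}$ is absorbed since $(K-1)\log(1/\epsilon_n) = o(n\epsilon_n^2)$.

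For the sieve, take $\Theta_n$ to consist of pairs $(r, \gamma)$ where $r$ is a sum of at most $T_n$ trees, each with at most $L_n$ leaves and leaf values bounded by $M_n$, with splits on a discretized grid, and $\|\gamma\|_\infty \le M_n$, with $T_n L_n \asymp n^{D_0/(2\alpha + D_0)}$ and $M_n \asymp \log n$. Condition~P on the BART prior (see Web Appendix~\ref{sec:proof-of-theorem}) supplies sub-exponential tails on $T_n$, $L_n$, and the leaf magnitudes, while the log-gamma prior on each $\gamma_k$ has exponential tails; together these verify (c). The entropy bound (b) follows from standard enumeration of tree topologies and an $\epsilon$-covering of the leaf values, giving $\log N(\epsilon_n, \Theta_n, \|\cdot\|_\infty) \lesssim T_n L_n \log n$, which matches $n \epsilon_n^2$ up to the usual logarithmic factor already built into $\epsilon_n$; the $\gamma$-component adds only $O(K \log(M_n/\epsilon_n))$ and is absorbed.

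The main obstacle is controlling the Lipschitz constants of the map $(r, \gamma) \mapsto p_\theta(\cdot \mid x)$, since the doubly-exponential structure of the cloglog link causes both the upper bound on KL and the lower bound on TV in terms of $\|r - r_0\|_\infty + \max_k |\gamma_k - \gamma_{0k}|$ to deteriorate when $\|r\|_\infty$ is large. This is overcome by restricting the testing/entropy arguments to the sieve, on which $\|r\|_\infty \le M_n \asymp \log n$, so that the relevant Lipschitz constants grow only polynomially in $n$ and contribute merely an additional $\log n$ factor that is already subsumed in the definition of $\epsilon_n$. A secondary point is the non-identifiability under the shift $(r, \gamma) \mapsto (r + c, \gamma - c)$, but since contraction is measured by $d$, which depends only on the likelihood and therefore quotients out this direction, this causes no difficulty. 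Once conditions (a)--(c) are verified at rate $\epsilon_n$, the conclusion of Theorem~\ref{thm:posterior-concentration} follows from Theorem~2.1 of \citet{ghosal2000convergence}.
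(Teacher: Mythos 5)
Your overall skeleton --- the Ghosal--Ghosh--van der Vaart conditions (a)--(c), BART prior-thickness results for the $r$-part, positivity of the $\log\Gam(a_\gamma,b_\gamma)$ density near $\gamma_0$ for the $\gamma$-part, and tree-ensemble entropy counts --- matches the paper's proof in Web Appendix~\ref{sec:proof-of-theorem}. The genuine gap is in your sieve. You truncate both the leaf values and $\|\gamma\|_\infty$ at $M_n \asymp \log n$, precisely so that the Lipschitz constant of $(r,\gamma)\mapsto p_\theta$, obtained by Taylor-expanding the cloglog probabilities, grows only polynomially in $n$. But then condition (c) fails: the requirement is $\Pi(\Theta_n^c)\le \exp\{-(c_2+4)\,n\epsilon_n^2\}$ with $n\epsilon_n^2 \asymp n^{D_0/(2\alpha+D_0)}(\log n)^{2\alpha/(2\alpha+D_0)}$ polynomial in $n$, whereas the Gaussian leaf prior gives tail mass of order $\exp\{-c(\log n)^2\}$ at level $M_n\asymp\log n$, and the $\log\Gam(a_\gamma,b_\gamma)$ prior gives $\Pr(\gamma_k<-M_n)\asymp e^{-a_\gamma M_n}=n^{-c a_\gamma}$, which is only polynomially small. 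So far too much prior mass lies outside your sieve. If you instead enlarge the truncation to $M_n\asymp n\epsilon_n^2$ (as (c) forces for these priors), your entropy step breaks: your route bounds the Hellinger/TV distance by the sup-norm parameter distance times a Lipschitz factor of order $e^{cM_n}$, and covering at scale $\epsilon_n/e^{cM_n}$ inflates the metric entropy by roughly $L\,n\epsilon_n^2 \gg n\epsilon_n^2$. As written, your proposal cannot satisfy (b) and (c) simultaneously.

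The paper resolves exactly this tension with a step you are missing: the latent-variable representation of the cloglog ordinal likelihood (Lemma~\ref{lem:s8}) combined with the data processing inequality, which bounds the integrated KL (hence Hellinger) between two ordinal models by the KL between the corresponding Gumbel location families; by Lemma~\ref{lem:s7} this is at most $C\Delta^2$ with $\Delta=\|r-r'\|_\infty\vee\|\gamma-\gamma'\|_\infty$, where $C$ is \emph{uniform} in $(r,\gamma)$ (only $\Delta\le 1$ is needed), because a location family's KL does not depend on the location. With this parameter-free Lipschitz bound, the truncation levels $U^2, G \asymp n\epsilon_n^2$ enter the entropy only through logarithms (Lemmas~\ref{lem:s3}--\ref{lem:s4}) while simultaneously making the prior complement mass exponentially small (Lemmas~\ref{lem:s5}--\ref{lem:s6}), so (b) and (c) hold together. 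You should either adopt this reduction or otherwise prove that the Hellinger distance is dominated by the sup-norm distance uniformly over a sieve large enough for (c). Two minor points: a two-sided comparison between TV and the parameter distance is not needed (only the upper bound enters, since the theorem is stated in the likelihood metric $d$), and under Condition P the number of trees $T$ is fixed, so no tail bound on the number of trees is available or required.
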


We provide a detailed proof of Theorem \ref{thm:posterior-concentration} in the supplementary material. This theorem implies that our BART-based posterior distribution concentrates around the true data generating mechanism at an optimal rate up-to logarithmic terms.

\begin{remark}
  We mainly provide this result for illustrative purposes, and similar results can likely also be established for the other models, adapted to their contexts. The main technical tools we use for establishing these results are prior thickness conditions and model entropy conditions established by \citet{jeong2023art}, which allow us to verify the sufficient conditions of \citet{ghosal2000convergence}.
\end{remark}

\section{Illustrations}
\label{sec:illustrations}

\subsection{Density Regression}
\paragraph*{Simulation Experiment.}
We evaluate the performance of NPHSBP using a simulation framework adapted from \citet{dunson2007bayesian}. The response variable $Y_i$ is generated from conditional a mixture distribution
$$
Y_i \sim e^{-2x} \Normal(x, 0.1^2) + (1 - e^{-2x}) \Normal(x^4, 0.2^2),
$$
given $X_i = x$. We set $n = 500$ and $X_i \sim \Uniform(0,1)$. This setup is challenging because of the rapidly changing shape of the conditional density across the range of $x$, making accurate estimation difficult, especially with limited data in specific local regions.

Figure~\ref{fig:dunson} shows the results of fitting the NPHSBP mixture model to a sample from this data generating mechanism. The first five panels display the true conditional response density $f(y \mid x)$ (dashed black line), the posterior mean estimate (solid blue line), and 95\% pointwise credible bands (shaded blue area) for five selected percentiles (10th, 25th, 50th, 75th, and 90th) of the empirical distribution of covariate $x$. The last panel shows the observed data (green points), true mean regression function $\E(Y \mid X = x)$ (dashed black line), posterior mean estimate (solid blue line), and 95\% pointwise credible band (shaded blue area) across the entire range of $x$.

The results show that NPHSBP effectively captures the complex, non-linear relationship between $x$ and $y$. The posterior mean estimates closely approximate the true conditional densities and mean regression function across different regions of the covariate space. The true functions are contained within the 95\% credible bands. Notably, the NPHSBP mixture model accurately estimates the changing shape of the conditional density as $x$ increases, transitioning from a unimodal to a bimodal distribution.

\begin{figure}
  \centering
  \includegraphics[width=1\textwidth]{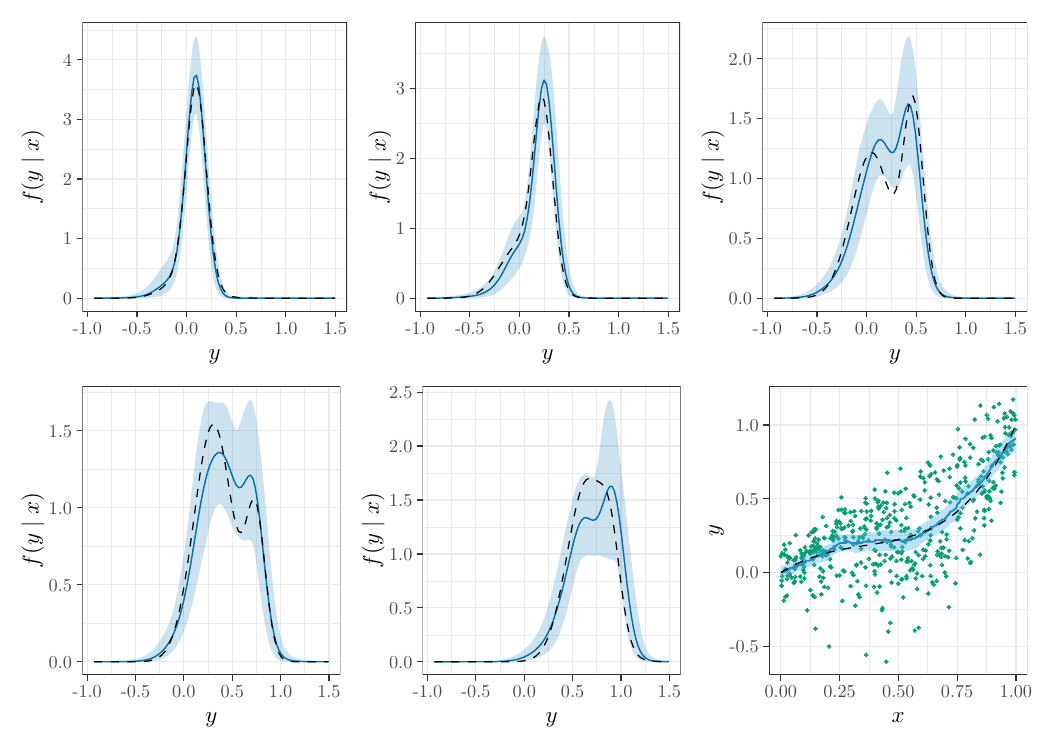}
  \caption{The first five panels display the true conditional response density $f(y \mid x)$ [dashed black line], posterior mean estimate [solid blue line], and 95\% pointwise credible bands [shaded blue area] for five selected percentiles of covariate $x$ at 10\%, 25\%, 50\%, 75\%, and 90\%. The last panel displays the observed data [green points] and true mean regression function $\E(y \mid x)$  [dashed black line], posterior mean estimate [solid blue line], and 95\% pointwise credible band [shaded blue area].}
  \label{fig:dunson}
\end{figure}

\paragraph*{Application to MEPS Data.}
We analyze data from the 2015 Medical Expenditure Panel Survey (MEPS), an ongoing U.S. survey that collects data on
individuals, families, their medical providers, and employers, with a particular focus on healthcare costs and uses.
Existing literature extensively studied the relationship between socioeconomic status, education, and obesity.
The effect of educational attainment on obesity is complex and known to be moderated by regional income, gender,
and other factors \citep{cohen2013educational}. We analyze this relationship using a subset of MEPS data, focusing
on $9,426$ women, controlling for log-income (as a percentage of the poverty line), age, and race. Based on existing
research, we anticipate an inverse relationship between educational attainment and obesity levels in this demographic.

We fit the proposed NPHSBP model and a total of $4,000$ posterior samples are generated using the MCMC algorithm detailed
in Web Appendix~\ref{sec:gibbs-density-regression}. We discard the first $2,000$ iterations as initial burn-in and
use the remaining $R = 2,000$ Monte Carlo samples for the results. Figure \ref{fig:density_meps} (top panel)
illustrates the estimated density of BMI as educational attainment varies from less-than-high-school to graduate degree
for white women, keeping other covariates fixed at their median values. We see the modal BMI value remains relatively
stable across education levels. However, as educational attainment increases, the BMI distribution becomes more peaked and
more highly right-skewed, suggesting that highly educated women are less likely to be highly obese. Specifically,
the relationship indicates that higher educational attainment is associated with a reduced probability of severe obesity.
As an example, we display the difference in BMI density between bachelor's degree and high school diploma holders in
Figure \ref{fig:density_meps} (bottom panel), which clearly shows the reduced extreme obesity risk for bachelor's degree
holders compared to high school graduates.

\begin{figure}
  \centering
  \includegraphics[width=1\textwidth]{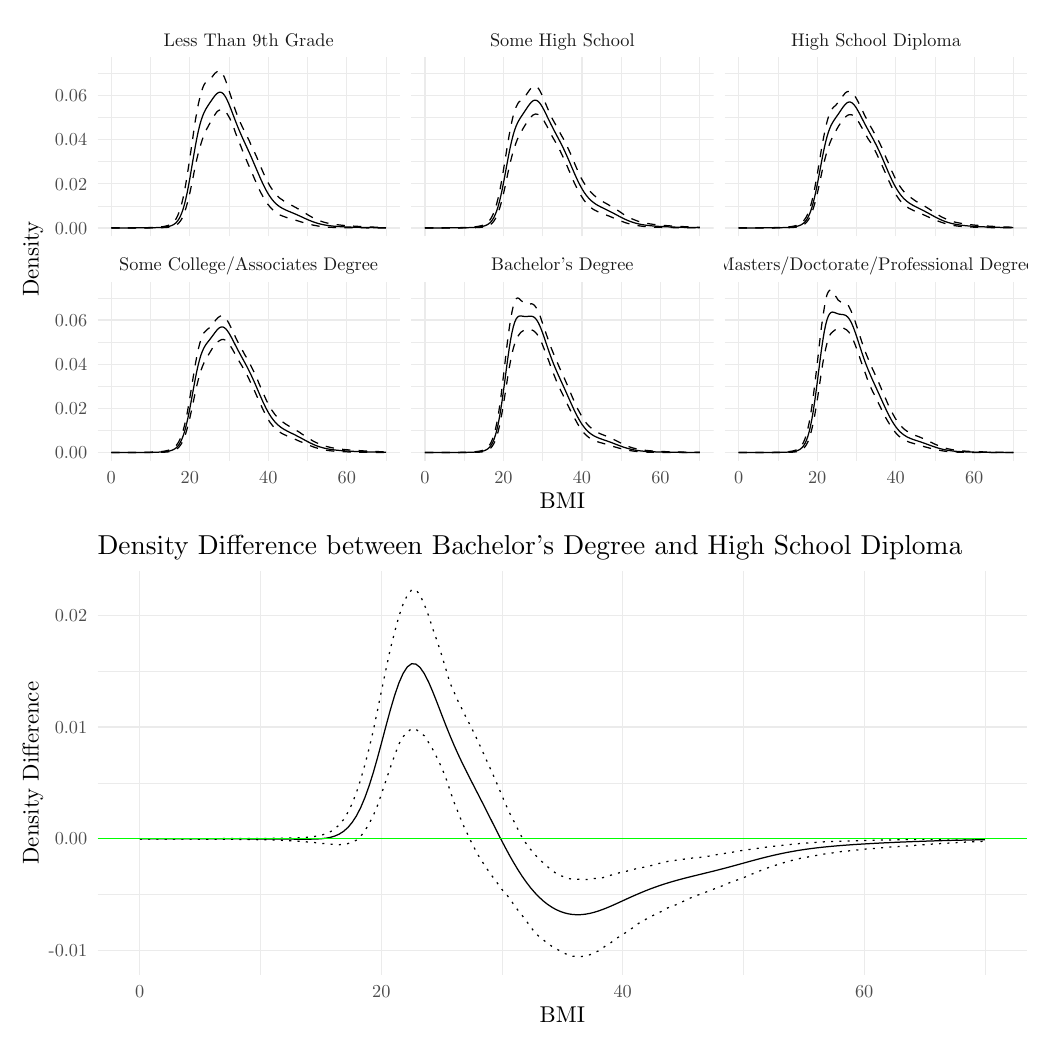}
  \caption{(Top panel) Estimated BMI density and 95\% pointwise credible bands for different educational levels for white women, fixing other covariates at their median values, based on MEPS data. (Bottom panel) Difference in BMI density between bachelor's degree and high school diploma holders.}
    \label{fig:density_meps}
\end{figure}


\subsection{Ordinal Regression}
We demonstrate the application of the proposed models, PHOBART and NPHOBART, for ordinal regression using data from the
$2021$ Medical Expenditure Panel Survey (MEPS). This analysis focuses on the relationship between depression levels,
treated as the ordinal response variable, and key demographic and socioeconomic variables, including age, income,
education level, marital status, and sex as covariates. Depression levels are categorized on a four-point scale: $0$
(not at all), $1$ (several days), $2$ (more than half the days), and $3$ (nearly every day). For convenience, we refer to these categories as \textit{no}, \textit{mild}, \textit{moderate}, and \textit{severe} depression levels. The data set consists of $14,676$ observations.

We start by fitting the PHOBART and NPHOBART models using the MCMC
algorithm outlined in Web Appendix~\ref{sec:gibbs-sampler-ordinal}, with $T = 50$ trees. To benchmark our proposed
models, we compare their performance against Bayesian linear Cumulative Link models: one with a probit link
(ProbitLM) and the other with a complementary log-log link (CLogLogLM), both implemented using the \texttt{brms}
package \citep{brms_package} in \texttt{R}. To ensure convergence, we discard the first $2,500$ iterations as burn-in and use the
remaining $2,500$ posterior samples for inference. For model comparison, we compute the leave-one-out expected log predictive
density (ELPD), a comprehensive measure of goodness-of-fit. Details on the calculation and
interpretation of this statistic can be found in \citet{vehtari2017practical}. The ELPD values
for the PHOBART and NPHOBART models are similar, with values of $-10551.8$ and $-10548.8$, respectively,
suggesting both models offer a comparable fit to the data. In contrast, the ELPD values for the ProbitLM and
CLogLogLM models are $-10623.9$ and $-10608.9$, respectively, indicating that the BART models outperform these alternatives. Notably, CLogLogLM performs better than ProbitLM, suggesting that the complementary log-log link is more appropriate for this data. A similar pattern is observed when comparing models using $5$-fold cross-validation, where the held-out deviance is averaged across $10$ different splits. The results, given in Table \ref{tab:model-comparison-ordinal}, show that the PHOBART model outperforms both CLogLogLM and ProbitLM, with a higher performance gap for ProbitLM. This further supports the suitability of the complementary log-log link over the probit link for this data.

\begin{table}[ht]
  \centering
  \begin{tabular}{lrr}
    \toprule
    Model      & Deviance Difference      \\
    \midrule
   CLogLogLM  & -39.82 \\
   ProbitLM    & -78.38     \\
    \bottomrule
  \end{tabular}
  \caption{Comparison of model performance. The deviance difference is computed as the deviance of the PHOBART model minus the deviance of the given model. Negative values indicate that the PHOBART model performs better than the alternatives.}
 \label{tab:model-comparison-ordinal}
\end{table}

\begin{figure}[t]
  \centering
  \includegraphics[width=1\textwidth, height = 0.5\textheight]{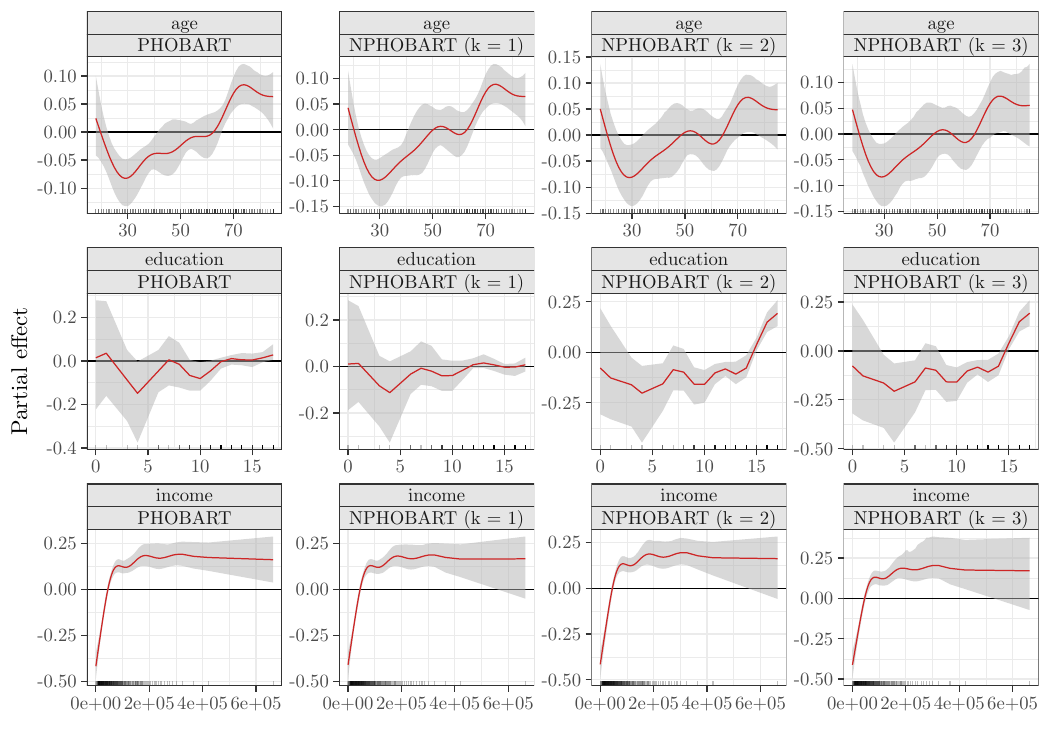}
  \caption{Posterior partial‐effect functions for \textit{age} (top row),
    \textit{education} (middle row), and \textit{income} (bottom row) under the
    PHOBART model (left column) and under the NPHOBART model for each ordinal
    cutoff $k=1,2,3$ (three right columns). Each panel shows the
    projection‐based mean partial effect (solid red line) along with
    95\% credible bands (shaded gray).}
    \label{fig:ordinal_meps_pe}
\end{figure}

Next, We use the posterior projection strategy of \citet{woody2021model} to summarize the posterior distribution of $r(x)$ for PHOBART and compute its projection onto an interpretable family $\sQ$ as $\widetilde r(x) = \arg \min_{q \in \sQ}\|r-q\|$. We consider the family of additive models $q(x) = \sum_j q_j(x_j)$ as the choice of $\sQ$ in our illustrations. Similarly, we compute projections of $r(x,k), k = 1,\dots,K-1$ for NPHOBART, wherein the number of depression levels $(K)$ is $4$. Figure \ref{fig:ordinal_meps_pe} illustrates the partial effects of age, education, and income on depression levels. This figure shows an increasing effect of income on $r(x)$, with the effect stabilizing after reaching a certain income threshold, which suggests that while higher income is generally linked to lower depression, the benefits may level off after a certain point. The effect of education changes from $k=1$ to $k=2,3$. Higher education ($3$ or more years of college) is associated with lower depression when moving from \textit{mild} to \textit{moderate} or from \textit{moderate} to \textit{severe} depression levels. The relationship between age and mental health is complex and non-linear. The partial effect on $r(x)$ shows a sharp decline from ages $20$ to $30$, followed by a steady increase until approximately age $70$, and a slight decline thereafter. This pattern suggests a pronounced increase in depression during early adulthood (ages $20–30$), a gradual rise in happiness through later adulthood (ages $30–70$), and a mild downturn in mental health during the final stages of life.

\begin{figure}[ht]
  \centering
  \includegraphics[width=1\textwidth, height=0.48\textheight]{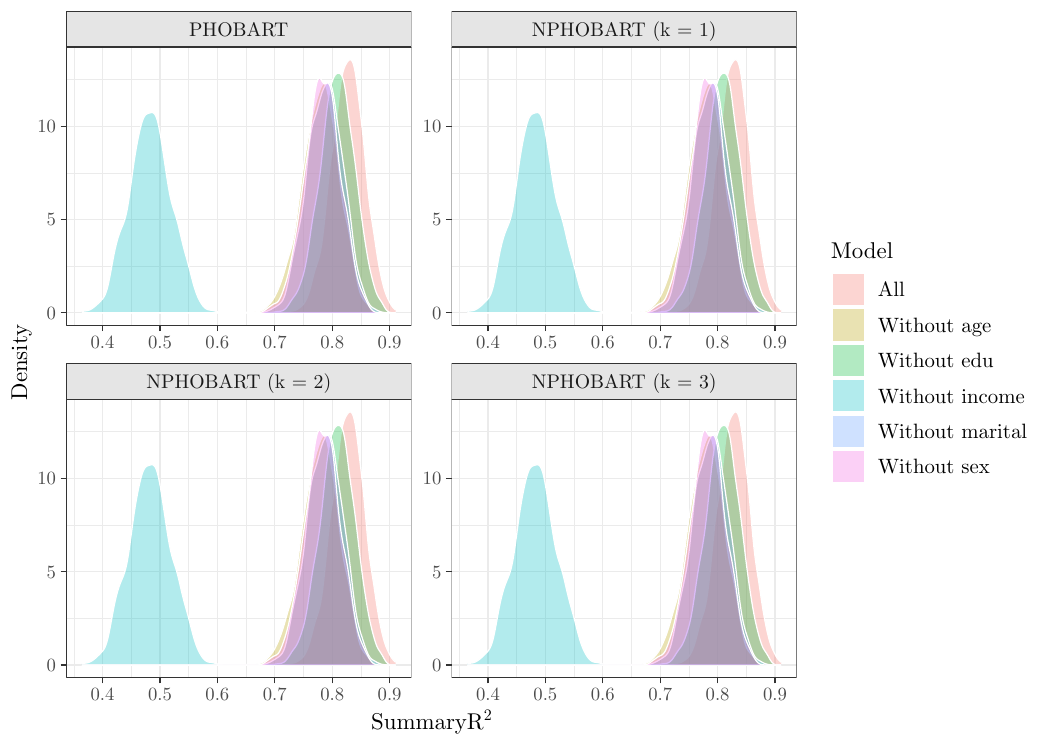}
  \caption{Posterior distributions of the summary $R^2$ statistics for the projection‐based summary model under PHOBART (top‐left) and NPHOBART ($k = 1, 2, 3$, remaining panels). The “All” distribution corresponds to using all predictors, while each other curve shows a reduced model obtained by omitting a single predictor (age, education, income, marital status, or sex).}
    \label{fig:ordinal_meps_rsq_all}
\end{figure}

We also generate the projections by leaving out one predictor at a time from the summary model $Q$ to assess its importance in forming predictions. We compute the summary $R^2$ for each of these summary models, where summary $R^2$ is defined as
$$
\text{summary } R^2 = \frac{\sum_i \{r(X_i) - \widetilde r(X_i)\}^2}{\sum_i \{r(X_i) - \bar r\}^2},
$$
where $\bar r = \frac{\sum_i r(X_i)}{N}$.
This $R^2$ measures the proportion of variability in $r(x)$ explained by the projection onto $Q$. Figure~\ref{fig:ordinal_meps_rsq_all} shows the posterior distributions of the projection‐based summary \(R^2\) statistic for both PHOBART and NPHOBART, comparing the full model (``All'') against models obtained by omitting each predictor (age, education, income, marital status, or sex) one at a time. We first see that the ``All'' $R^2$ is quite large, indicating that the effects of interaction effects is small relative to the main effects. Next, a substantial leftward shift in the \(R^2\) distribution indicates a larger drop in predictive accuracy for the corresponding reduced model. Notably, omitting income from the summary model causes the largest decrease in \(R^2\), which illustrates income’s outsized contribution to explaining depression levels in this data set. By contrast, excluding other predictors has a more modest effect on the \(R^2\) distributions, suggesting that although age, education, marital status, and sex each play a role, they are not as critical as income in capturing the overall variation in depression outcomes.

\subsection{Survival Regression}

We conclude with an illustration of the PH and NPH survival regression models on the \texttt{LeukSurv} dataset
available in the R package \texttt{spBayesSurv}. The outcome in this dataset is the survival time of leukemia
patients, and our goal is to understand how this survival time relates to the Townsend index (a measure of
societal deprivation), with other covariates of interest being age, sex, and overall white blood cell count.

We begin by comparing the PH approach to two competing methods:
\begin{enumerate}
\item[(i)] Cox Linear: a linear Cox model, $\lambda(t \mid x) = \lambda_0(t) \, e^{x^\top \beta}$ with $\lambda_0(t)$ modeled as piecewise continuous.
\item[(ii)] Weibull: the same as the Cox model, except $\lambda_0(t)$ is modeled parametrically as a Weibull hazard and with the continuous variables included as natural cubic splines.
\end{enumerate}

We compare these models by $5$-fold cross-validation using the held-out deviance, i.e., $\sum_i - 2 \log p(y_i \mid x_i, \delta_i, \Data_{-\text{fold}(i)})$ where $p(y \mid x, \delta, \Data)$ denotes the predictive density and $\Data_{-\text{fold}(i)}$ denotes the dataset that excludes the fold that observation $i$ is a member of. We then averaged this across 10 different splits into folds. Results are given in Table~\ref{tab:model-comparison}, where we see that the proportional hazards model outperforms both models.

\begin{table}[ht]
  \centering
  \begin{tabular}{lrr}
    \toprule
    Model      & Deviance Difference      \\
    \midrule
    Cox Linear  & -1.91 \\
    Weibull    & -217.00         \\
    \bottomrule
  \end{tabular}
  \caption{Model performance comparison under the proportional hazards assumption on the leukemia data \label{tab:model-comparison}. The deviance difference is computed as the deviance of the PH BART model minus the deviance of the given model.
}
\end{table}

Next, we compare the PH model to the NPH survival model. We again use held-out deviance to compare the models.
Results are given in Table~\ref{tab:looic-comparison} and find that the PH model performs notably worse than the
NPH model.

\begin{table}[ht]
  \centering
  \begin{tabular}{lr}
    \toprule
    Model & Deviance \\
    \midrule
    Non-Proportional Hazards & 11832.3 \\
    Proportional Hazards & 11895.4 \\
    \bottomrule
  \end{tabular}
  \caption{Comparison of the NPH and PH BART models on the leukemia data.\label{tab:looic-comparison}}
\end{table}

For brevity, we only interpret the model fit for the PH model; a more detailed analysis for the NPH model is given in the Supplementary Material, as well the posterior distribution of the baseline hazard for the PH model. In Figure~\ref{fig:addsum} we display the posterior distribution of the partial effect of each of the covariates on $r(x)$ as described by \citet{woody2021model}. We see that the effects of the covariates is non-linear: generally, older individuals have poorer prognoses, as do individuals with high white blood cell counts, while higher deprivation scores are associated with worse outcomes up to a point before leveling off. This gives some explanation for the better predictive performance of the PH model over the Cox Linear and Weibull models.

\begin{figure}[t]
  \centering
  \includegraphics[width=1\textwidth]{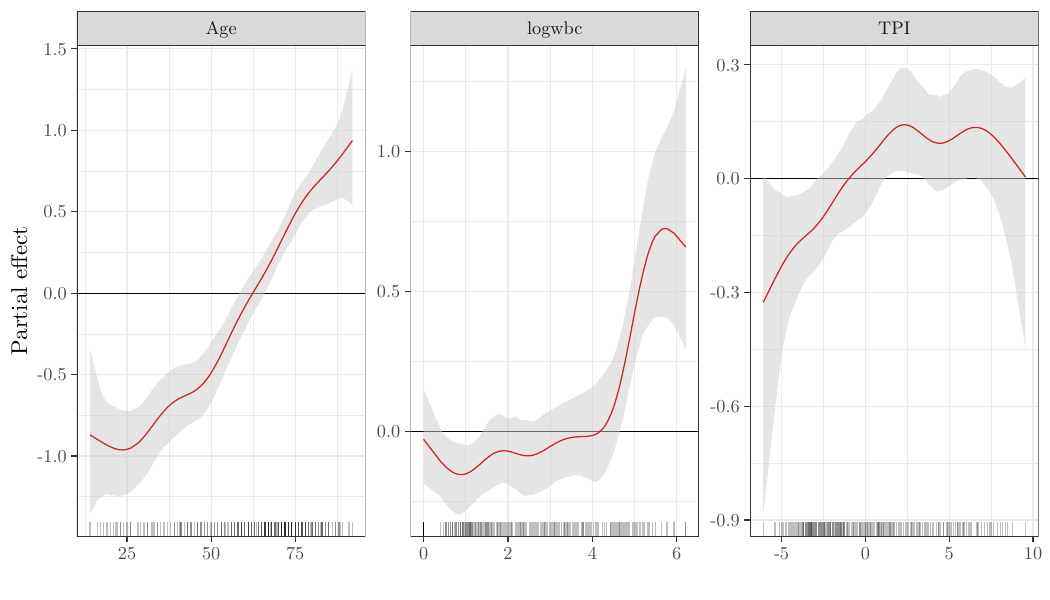}
  \caption{Posterior projection summaries for the partial effect of age, log of white blood cell count, and Townsend index.}
  \label{fig:addsum}
\end{figure}

\section{Discussion}
\label{sec:discussion}


In this work, we have proposed a unified Bayesian nonparametric framework for ordinal, survival, and density regression using the complementary log-log (\cloglog) link. This framework leverages the unique properties of the \cloglog\ link, particularly its connection to latent truncated exponential variables and Bayesian additive regression trees (BART), to enable computationally efficient posterior simulation. Compared to the analogous models using probit or logit links, our approach is both computationally simpler and more robust. For instance, the proposed ordinal regression framework avoids the need to impose constraints on the cutpoints, a common challenge in fitting cumulative link models. Similarly, our proportional hazards stick-breaking process (PHSBP) and its non-proportional extension (NPHSBP) require augmenting only a single latent variable per individual, making them more scalable than the alternatives, for example, the probit stick-breaking process \citep{rodriguez2011nonparametric}.

The practical utility of our methods is demonstrated through a series of simulated and real datasets. For example, in ordinal regression models, using depression level data from the Medical Expenditure Panel Survey (MEPS), we show that, apart from being easier to implement, the \cloglog\ link provides a better model fit compared to the probit link. Beyond computational and practical advantages, our work also provides theoretical insights into the role of the \cloglog\ link in Bayesian nonparametrics.  We establish posterior concentration rates for our ordinal regression models, showing that they achieve minimax-optimal rates up to logarithmic factors. Additionally, we connect the PHSBP to the weight-dependent Dirichlet process \citep[DDP,][]{maceachern2000dependent}, demonstrating that a special case of the PHSBP corresponds to a DDP, further highlighting its flexibility and generality.


Several promising directions for future research emerge from this work. First, while the \cloglog\ link has shown advantageous across multiple settings, exploring alternative asymmetric link functions could provide further flexibility. Another avenue is extending our framework to handle more complex data structures, such as longitudinal or spatial data, where the computational simplicity of our approach, combined with the flexibility of BART, makes it a natural candidate for extension. Additionally, further theoretical work could focus on deriving posterior concentration rates for the NPHSBP and survival models.  Also of interest would be extending our methods to accommodate the inclusion of additional forests, or sharing forests across different estimation tasks \citep{linero2020semiparametric}, which can be used for example in causal inference for estimating both average and heterogeneous treatment effects \citep{hahn2020bayesian}.

\bibliographystyle{apalike}
\bibliography{references}

\clearpage

\singlespacing

\renewcommand{\thesection}{S.\arabic{section}}
\renewcommand{\thesubsection}{S.\arabic{section}.\arabic{subsection}}
\renewcommand{\thetable}{S.\arabic{table}}
\renewcommand{\thefigure}{S.\arabic{figure}}
\renewcommand{\theequation}{S.\arabic{equation}}
\renewcommand{\thepage}{S.\arabic{page}}
\setcounter{page}{1}
\setcounter{section}{0}
\setcounter{subsection}{0}
\setcounter{table}{0}
\setcounter{figure}{0}
\setcounter{equation}{0}

\vspace*{2em}
\begin{center}
  \Large Web Appendix for 
A Unified Bayesian Nonparametric Framework for Ordinal, Survival, and Density
  Regression Using the Complementary Log-Log Link
\end{center}

\doublespacing

\section{Gibbs Sampler for the Proportional Hazards Ordinal Regression Model}
\label{sec:gibbs-sampler-ordinal}

We now describe the Gibbs sampling algorithm for the proportional hazards ordinal regression model. Let $\sM_t = \{\mu_{t\ell}\}_{\ell \in \Leaves(\Tree_t)}$ denote the leaf node parameters for tree $t$, and let $\theta = \{\gamma_j\}_{j=1}^{K-1}$ denote the remaining parameters. For observation $i$, we write $r(X_i) = \eta_i + \mu_{t\ell}$, where $\eta_i = \sum_{m \ne t} g(X_i; \Tree_m, \sM_m)$ denotes the contribution of all trees except tree $t$, and $g(X_i; \Tree_t, \sM_t) = \mu_{t\ell}$ if $X_i$ is associated with leaf node $\ell$ of tree $t$ (denoted $X_i \leadsto \ell$). Note that the likelihood of the data is given by (for fixed $t$):
\begin{align*}
  \prod_i \left[\{1 - \exp(-e^{\gamma_{Y_i} + r(X_i)})\} \prod_{j < Y_i} \exp(-e^{\gamma_j + r(X_i)})  \right].
\end{align*}
Hence, if we sample $Z_i \sim \TExp(e^{\gamma_{Y_i} + r(X_i)}, 0, 1)$  for $Y_i < K$ then after augmenting we have
\begin{align*}
  \prod \exp\left(1(Y_i < K)\{\gamma_{Y_i} + r(X_i) - Z_i e^{\gamma_{Y_i} + r(X_i)}\} - \sum_{j < Y_i} e^{\gamma_j + r(X_i)}\right).
\end{align*}
The algorithm alternates between three main steps: (i) updating the tree structures and leaf node parameters $(\Tree_t, \sM_t)$, (ii) updating latent variables $Z_i$ from truncated exponential distributions, and (iii) updating the threshold parameters $\gamma_j$. We detail each step below.

\paragraph{Updating Latent Variables.} For $i = 1,\ldots,n$, we update the latent variables according to
\begin{math}
[Z_i \mid \cdot] \sim \TExp(e^{\gamma_{Y_i} + r(X_i)}; 0, 1).
\end{math}

\paragraph{Updating Tree Parameters.} For the $t$th tree, we begin with the likelihood, which takes the form
\begin{align*}
L(\Tree_t, \sM_t) &= \prod_i \Pr(Y_i \mid \Tree_t, \sM_t; \Tree_{-t}, \sM_{-t}, \theta) \
\\&= \prod_{\ell \in \Leaves(\Tree_t)} \exp\left[\sum_{i: Y_i \neq K, X_i \leadsto \ell} \left(\gamma_{Y_i} + \eta_i + \mu_{t\ell} - Z_i e^{\gamma_{Y_i} + \eta_i + \mu_{t\ell}}\right) - \sum_{i: X_i \leadsto \ell} \sum_{j: j < Y_i} e^{\gamma_j + \eta_i + \mu_{t\ell}}\right].
\end{align*}
After dropping terms not involving $(\Tree_t, \sM_t)$, this reduces to
\begin{align*}
L(\Tree_t, \sM_t) &\propto \prod_{\ell \in \Leaves(\Tree_t)} \exp\left[\sum_{i: Y_i \neq K, X_i \leadsto \ell} \left(\mu_{t\ell} - Z_i e^{\gamma_{Y_i} + \eta_i + \mu_{t\ell}}\right) - \sum_{i: X_i \leadsto \ell} \sum_{j: j < Y_i} e^{\gamma_j + \eta_i + \mu_{t\ell}}\right] \
\\&= \prod_{\ell \in \Leaves(\Tree_t)} \exp\left[\mu_{t\ell} \sum_{i: Y_i \neq K, X_i \leadsto \ell} 1 - e^{\mu_{t\ell}}\left(\sum_{i: X_i \leadsto \ell} \sum_{j: j < Y_i} e^{\gamma_j + \eta_i} + \sum_{i: Y_i \neq K, X_i \leadsto \ell} Z_i e^{\gamma_{Y_i} + \eta_i}\right)\right] \
\\&= \prod_{\ell \in \Leaves(\Tree_t)} \exp(\mu_{t\ell} A_\ell - e^{\mu_{t\ell}} B_\ell),
\end{align*}
where we define
\begin{align*}
A_\ell &= \sum_{i: Y_i \neq K, X_i \leadsto \ell} 1 \qquad \text{and} \qquad
B_\ell = \sum_{i: X_i \leadsto \ell} \sum_{j: j < Y_i} e^{\gamma_j + \eta_i} + \sum_{i: Y_i \neq K, X_i \leadsto \ell} Z_i e^{\gamma_{Y_i} + \eta_i}.
\end{align*}
Under the $\log \Gam(a, b)$ prior for the leaf node parameters, the integrated likelihood for a given tree structure $\Tree_t$ is
\begin{align*}
L(\Tree_t) &= \prod_{\ell \in \Leaves(\Tree_t)} \int \exp(\mu_{t\ell} A_\ell - e^{\mu_{t\ell}} B_\ell) \frac{b^a}{\Gamma(a)} \exp(a\mu_{t\ell} - be^{\mu_{t\ell}})  d\mu_{t\ell} \
\\&= \prod_{\ell \in \Leaves(\Tree_t)} \frac{b^a}{\Gamma(a)} \int \exp\{(a + A_\ell)\mu_{t\ell} - (b + B_\ell)e^{\mu_{t\ell}}\}  d\mu_{t\ell} \
\\&= \prod_{\ell \in \Leaves(\Tree_t)} \frac{b^a}{\Gamma(a)} \cdot \frac{\Gamma(a + A_\ell)}{(b + B_\ell)^{(a + A_\ell)}}.
\end{align*}
Noting the kernel of the $\log \Gam(a + A_{\ell}, b + B_{\ell})$ distribution in the integral, this also implies that
the full conditional distribution of $\mu_{t\ell}$ is
\begin{align*}
[\mu_{t\ell} \mid \text{everything else}] \sim \log \Gam(a + A_\ell, b + B_\ell).
\end{align*}

\paragraph{Updating Threshold Parameters.}

From the joint likelihood, dropping the terms not involving $\gamma_j$, the likelihood of each $\gamma_j$ is given by \begin{align*}
\pi(\gamma_j \mid \text{everything else}) &\propto \prod_{\ell \in \Leaves(\Tree_t)} \exp\left[\sum_{i: Y_i \neq K, X_i \leadsto \ell} \left(\gamma_{Y_i} + \eta_i + \mu_{t\ell} - Z_ie^{\gamma_{Y_i} + \eta_i + \mu_{t\ell}}\right) \right. \
\\&\quad \left. - \sum_{i: X_i \leadsto \ell} \sum_{j: j < Y_i} e^{\gamma_{j} + \eta_i + \mu_{t\ell}}\right] \cdot \pi(\gamma) \
\\&\propto \exp\left[\gamma_j \left\{a_\gamma + \sum_{i: Y_i = j} 1\right\} - e^{\gamma_j}\left\{b_\gamma + \sum_{i: Y_i = j} Z_ie^{r(X_i)} + \sum_{i: Y_i > j} e^{r(X_i)}\right\}\right].
\end{align*}
Defining
\begin{align*}
A_{\gamma_j} = \sum_{i: Y_i = j} 1 \qquad\text{and}\qquad
B_{\gamma_j} = \sum_{i: Y_i = j} Z_ie^{r(X_i)} + \sum_{i: Y_i > j} e^{r(X_i)}.
\end{align*}
we see that the full conditional distribution for $\gamma_j$ takes the form of a log-gamma distribution:
\begin{align*}
[\gamma_j \mid \text{everything else}] \sim \log \Gam(a_\gamma + A_{\gamma_j}, b_\gamma + B_{\gamma_j}).
\end{align*}

\section{Gibbs Sampler for the Non-Proportional Hazards Ordinal Regression Model}
\label{sec:gibbs-sampler-ordinal}

The key difference in the non-proportional hazards model is that the nonparametric function $r(x,k)$ depends on both the covariates and the category. Let $\sM_t = \{\mu_{t\ell}\}_{\ell \in \Leaves(\Tree_t)}$ denote the leaf node parameters for tree $t$, and $\theta = \{\gamma_j\}_{j=1}^{K-1}$ denote the threshold parameters. For observation $i$, we write $r(X_i,k) = \eta_i(k) + \mu_{t\ell}$, where $\eta_i(k) = \sum_{m \ne t} g(X_i,k; \Tree_m, \sM_m)$ denotes the contribution of all trees except tree $t$ for category $k$.

\paragraph{Updating Latent Variables.} The update for the latent variables remains similar to the proportional hazards case, but now incorporates the category-specific predictions:
\begin{align*}
  [Z_i \mid \text{everything else}] \sim \TExp(e^{\gamma_{Y_i} + r(X_i,Y_i)}; 0, 1),
\end{align*}
and we only update $Z_i$ when $Y_i < K$.

\paragraph{Updating Tree Parameters.} After augmenting the latent exponential random variables, the likelihood takes the form
\begin{align*}
  &\prod_i \exp\left\{ 1(Y_i < K) \{\gamma_{Y_i} + r(X_i, Y_i) - Z_i \, e^{\gamma_{Y_i} - r(X_i, Y_i)}\} - \sum_{j < Y_i} e^{\gamma_j + r(X_i, j)} \right\}.
\end{align*}
For simplicity, let us define for $j = 1,\ldots,K-1$ the quantities $Z_{ij} = 1$ if $Y_i > j$, $Z_{ij} = Z_i$ if $Y_i = j$ and $Z_{ij} = 0$ if $Y_i > j$. Then we can rewrite this expression as
\begin{align*}
  \prod_i \exp\left\{ \sum_{j=1}^{K-1} {1(Y_i = j)\{\gamma_j + r(X_i,j)\} - Z_{ij} e^{\gamma_j + r(X_i,j)}} \right\}.
\end{align*}
Breaking the likelihood up over the leaf nodes, we can write this as
\begin{align*}
  &\prod_{\ell \in \Leaves(\Tree_t)} \exp\left\{ \sum_{j=1}^{K-1} \sum_{i : (X_i, j) \leadsto \ell} {1(Y_i = j)\{\gamma_j + r(X_i,j)\} - Z_{ij} e^{\gamma_j + r(X_i,j)}} \right\}.
  \\&= \prod_{\ell \in \Leaves(\Tree_t)} \exp\left\{ \sum_{j=1}^{K-1} \sum_{i : (X_i, j) \leadsto \ell} 1(Y_i = j)\{\gamma_j + \eta_i(j) + \mu_{t\ell}\} - Z_{ij} e^{\gamma_j + \eta_i(j)} e^{\mu_{t\ell}} \right\}.
  \\&\propto \prod_{\ell \in \Leaves(\Tree_t)} \exp\left\{ A_{\ell} \mu_{t\ell} - B_{\ell} e^{\mu_{t\ell}}\right\},
\end{align*}
where $A_{t\ell} = \sum_{j = 1}^{K-1} \sum_{i : (X_i, j) \leadsto \ell} 1(Y_i = j)$ is the number of $(X_i, Y_i)$ pairs associated to $\mu_{t\ell}$ (excluding $Y_i = K$) and $B_{t\ell} = \sum_{j = 1}^{K - 1} \sum_{i : (X_i, j) \leadsto \ell} Z_{ij} e^{\gamma_j + \eta_i(j)}$. As before, this leads to the marginal likelihood of $\Tree_t$ being proportional to
\begin{align*}
  \prod_{\ell \in \Leaves(\Tree_t)} \frac{b^a}{\Gamma(a)} \cdot \frac{\Gamma(a + A_\ell)}{(b + B_\ell)^{(a + A_\ell)}}.
\end{align*}
and the full conditional $\mu_{t\ell} \sim \log \Gam(a + A_{\ell}, b + B_{\ell})$.

\paragraph{Updating Threshold Parameters.} The full conditional distribution of $\gamma_j$ can be derived similarly to the proportional hazards case. From the augmented likelihood, the terms involving $\gamma_j$ are, for $j = 1, \ldots, K - 1$, given by
\begin{align*}
  \exp\left[\gamma_j \sum_{i: Y_i = j} 1 - e^{\gamma_j} \left(\sum_{i: Y_i = j} Z_i e^{r(X_i,j)} + \sum_{i: Y_i > j} e^{r(X_i,j)}\right)\right].
\end{align*}
Combining this with the log-gamma prior $\gamma_j \sim \log \Gam(a_\gamma, b_\gamma)$, the full conditional distribution becomes:
\begin{align*}
  [\gamma_j \mid \text{everything else}] &\sim \log \Gam\left(a_\gamma + \sum_{i: Y_i = j} 1, \: b_\gamma + \sum_{i: Y_i = j} Z_i e^{r(X_i,j)} + \sum_{i: Y_i > j} e^{r(X_i,j)}\right).
\end{align*}

\section{Gibbs Sampler for Density Regression}
\label{sec:gibbs-density-regression}

The Gibbs sampler for the stick-breaking models follows the same structure as for the ordinal regression models. The only new steps are that we need to update the latent cluster indicators and we need to update the parameters of the mixture components.

\paragraph{Update cluster assignments.} For each observation $i$, sample a latent cluster assignment $C_i$ from its full conditional distribution. The full conditional distribution of $C_i$ is given by
\begin{align*}
  \Pr(C_i = k \mid \text{everything else}) \propto \varpi_k(X_i) \, \Normal\{Y_i \mid \mu_k + h(X_i), \sigma^2_k\}.
\end{align*}

\paragraph{Update the regression trees.} The tree parameters are updated using a weighted regression update as described, for example, by \citet{pratola2020heteroscedastic}. For the mean function $h(x)$, we update the trees using the residuals $Y_i - \mu_{C_i}$ with weights $1/\sigma^2_{C_i}$, where $C_i$ is the cluster assignment and $(\mu_k, \sigma^2_k)$ are the parameters of mixture component $k$. The variance associated with this regression update is fixed at $\sigma^2 = 1$.

\paragraph{Update the stick-breaking weights.} Treat the $C_i$'s as ordinal data, which is then used to perform the updates for $\varpi_k(\cdot)$ using either the PHOBART or NPHOBART MCMC algorithms described previously, depending on whether one wants to fit the PHSBP or NPHSBP mixture model.

\paragraph{Update the mixture component parameters and hyperparameters.} Update the $\mu_k$'s and $\sigma_k$'s by sampling from the full conditionals based on the observations $Y_i - h(X_i) \sim \Normal(\mu_k, \sigma^2_k)$ when $C_i = k$. Then sample the parameters of the base measure $(\mu_k, \sigma^{-2}_k) \sim \Normal(\mu_0, \sigma^2_0) \, \Gam(a_\sigma, b_\sigma)$ from their full conditionals.

\section{Gibbs Sampler for Survival Models}
\label{sec:gibbs-survival}

We begin with the proportional hazards survival model. Recall that the likelihood of this model is given by
\begin{align*}
  \prod_i
  \exp\left\{\delta_i \, \log \lambda_{B_i} + \delta_i r(X_i) - 
  e^{r(X_i)} \sum_{b = 1}^{B_i - 1} \lambda_b (t_b - t_{b-1}) - 
  e^{r(X_i)} \lambda_{B_i}(Y_i - t_{B_i - 1})\right\}.
\end{align*}

\paragraph{Update for the $\lambda_b$'s.}
Under the $\Gam(a_\lambda, b_\lambda)$ prior for $\lambda_b$, the relevant terms in the likelihood are:
\begin{align*}
\exp\left\{\sum_{i: B_i = b} \delta_i \log \lambda_b - \lambda_b \sum_{i: Y_i > t_b} e^{r(X_i)}(t_b - t_{b-1}) - \lambda_b \sum_{i: B_i = b} e^{r(X_i)}(Y_i - t_{b-1})\right\}.
\end{align*}
Combined with the gamma prior, this gives a conjugate update:
\begin{align*}
[\lambda_b \mid \text{everything else}] \sim \Gam\left(a_\lambda + \sum_{i: B_i = b} \delta_i, , b_\lambda + \sum_{i: Y_i > t_b} e^{r(X_i)}(t_b - t_{b-1}) + \sum_{i: B_i = b} e^{r(X_i)}(Y_i - t_{b-1})\right).
\end{align*}

\paragraph{Update for the trees.} Letting $\eta_i = \sum_{k : k \ne t} g(X_i; \Tree_t, \sM_t)$, the likelihood for $(\Tree_t, \sM_t)$ is proportional to
\begin{align*}
    \prod_{\ell \in \Leaves(\Tree_t)}
    \prod_{i : X_i \leadsto \ell}
    \exp\left\{\delta_i \mu_{t\ell} - e^{\mu_{t\ell}} e^{\eta_i} \sum_{b = 1}^{B_i - 1} \lambda_b (t_b - t_{b-1}) - e^{\mu_{t\ell}} e^{\eta_i} \lambda_{B_i}(Y_i - t_{B_i - 1})\right\}
\end{align*}
Therefore, as in previous calculations, the marginal likelihood of $\Tree_t$ is given by
\begin{align*}
  \prod_{\ell \in \Leaves(\Tree_t)} \frac{b^a}{\Gamma(a)} \times \frac{\Gamma(a + A_\ell)}{(b + B_\ell)^{a + A_\ell}}
\end{align*}
where $A_\ell = \sum_{i: X_i \leadsto \ell} \delta_i$ and $B_\ell = \sum_{i: X_i \leadsto \ell} \left[\sum_{b = 1}^{B_i - 1} \lambda_b (t_b - t_{b-1}) + \lambda_{B_i}(Y_i - t_{B_i - 1})\right]e^{\eta_i}$. The full conditional distribution of the leaf parameters is therefore $\mu_{t\ell} \sim \log \Gam(a + A_\ell, b + B_\ell)$.

For the non-proportional hazards model, we need to modify the updates to account for the interaction between covariates and time bins. The likelihood becomes:
\begin{align*}
\prod_i \exp\left\{\delta_i \log \lambda_{B_i} + \delta_i r(X_i, B_i) - \sum_{b = 1}^{B_i - 1} \lambda_b e^{r(X_i, b)}(t_b - t_{b-1}) - \lambda_{B_i} e^{r(X_i, B_i)}(Y_i - t_{B_i - 1})\right\}
\end{align*}
The $\lambda_b$ update becomes:
\begin{align*}
[\lambda_b \mid \text{everything else}] \sim \Gam\left(a_\lambda + \sum_{i: B_i = b} \delta_i, b_\lambda + \sum_{i: Y_i >  t_b} e^{r(X_i,b)}(t_b - t_{b-1}) + \sum_{i: B_i = b} e^{r(X_i,b)}(Y_i - t_{b-1})\right).
\end{align*}
For the trees, let $\eta_{ib} = \sum_{k \ne t} g(X_i, b ; \Tree_k, \sM_k)$, and define further the quantities
\begin{align*}
  Z_{ib} = 1(Y_i \ge t_b) (t_b - t_{b-1}) + 1(t_{b-1} \le Y_i < t_b) (Y_i - t_{b-1}).
\end{align*}
Then, by essentially the same arguments as above, the marginal likelihood and full conditionals are of the same form but now with
\begin{align*}
  A_\ell = \sum_{(i,b) \leadsto \ell} \delta_i \, 1(B_i = b)
  \qquad \text{and} \qquad
  B_\ell = \sum_{(i,b) \leadsto \ell} Z_{ib} \lambda_b e^{\eta_{ib}}.
\end{align*}

\section{Proof of Theorem~\ref{thm:posterior-concentration}}
\label{sec:proof-of-theorem}

The overall strategy for establishing Theorem~\ref{thm:posterior-concentration} is to verify the sufficient conditions given by \citet{ghosal2000convergence}. This involves establishing both that the prior assigns sufficiently high mass to shrinking neighborhoods around the true data generating mechanism and that the prior assigns sufficiently high mass to a low-entropy sieve on the model space.

We start by introducing some notation. Let $\Pi = \Pi_r \times \Pi_\gamma$ denote the prior distribution of the model parameters $\theta = (r, \gamma)$, with $\Pi_r$ and $\Pi_\gamma$ denoting the respective distributions of $r$ and $\gamma = (\gamma_1, \ldots, \gamma_{K-1})$. We denote the observed data as $\Data_n = (Y_i, X_i: i = 1, \dots, n)$, and write $\Pi(\cdot \mid \Data_n)$ to denote the posterior distribution of the model parameters given the data $\Data_n$. We write $a_n \lesssim b_n$ if there exists a global constant $c > 0$ (independent of $n$ and $P$) such that $a_n \leq cb_n$ for sufficiently large $n$, and $a_n \asymp b_n$ if $a_n \lesssim b_n$ and $b_n \lesssim a_n$. We use $C$ to denote a generic positive constant (independent of $n$ and $p$) that may change from line to line. We assume that the prior on $\gamma_j$ is a log-gamma distribution, i.e., $\gamma_j \stackrel{\text{iid}}{\sim} \log \Gam(a_\gamma, b_\gamma)$ for $j = 1, \ldots, K - 1$. Throughout we let $\|r\|_\infty = \sup_{x \in \scriptX} |r(x)|$, with $\scriptX = [0,1]^P$, and $\|\gamma\|_\infty = \max_{j = 1, \ldots, K - 1} |\gamma_j|$.  The following conditions on $\Pi_r$ are adapted from \citet{orlandi2021density}, which are themselves variations of the assumptions proposed in \citet{jeong2023art}.

\paragraph{Condition P (on $\Pi$).} Let $S \subseteq \{1, \ldots, P\}$ denote the coordinates of $x$ at which the trees $\Tree_1, \ldots, \Tree_T$ split.
\begin{enumerate}
    \item[(P1)] \emph{Prior on the support set.} The support set $S$ of $r$ has prior $\Pi_S(S) = P^{-D} \pi_D(D)$, where $D \equiv |S|$ and $\pi_D(d)$ is an exponentially decaying prior satisfying
    \[
    a_1P^{-a_3} \pi_D(d-1) \leq \pi_D(d) \leq a_2P^{-a_4}\pi_D(d-1)
    \]
    for some positive constants $a_1, \ldots, a_4$ and $d = 1, \ldots, P$.

    \item[(P2)] \emph{Tree prior.} Given $S$, each tree $\Tree_t$, $t = 1, \ldots, T$ is assigned a heterogeneous Galton-Watson branching process prior with splitting proportion $q(d) = \nu^d$ for some $\nu \in (0, 1/2)$.

    \item[(P3)] \emph{Leaf node parameters prior.} The leaf node parameters $\mu_{tl}$ of $\Tree_t$ are given independent $\text{Normal}(0, \sigma^2_\mu)$ priors.

    \item[(P4)] \emph{Split point selection.} Splits in the tree ensemble can occur only at a number $b_n$ of candidate split points $\mathcal{Z}_n \subseteq [0, 1]^P$, which are selected from uniformly. Additionally, $\log b_n \lesssim \log n$.

    \item[(P5)] \emph{Approximation capability of tree.} For each $n$ there exists a decision tree $\Tree$ and leaf node values $\Leaves$ built from the candidate split-points in $\mathcal{Z}_n$ such that the regression tree $r^\star(x) = g(x; \Tree, \Leaves)$ satisfies $\|r_0 - r^\star\|_\infty \lesssim (\log n/n)^{\alpha/(2\alpha+D_0)}$ where $r_0$ is the true $D_0$-sparse regression function satisfying Condition F.

    \item[(P6)] The parameters $r$ and $\gamma$ are a-priori independent.
\end{enumerate}

Condition P5 should be interpreted as a condition on the candidate split-points, and it holds very generally given that $r_0$ satisfies Condition F; see \citet{jeong2023art} for specific conditions.

We make use of the following Lemma \ref{lem:s1} and \ref{lem:s3} stated in \citet{orlandi2021density}, which can be extracted by analyzing the proofs of Lemma 5 –- 7 of \citet{jeong2023art}, and Lemma \ref{lem:s5} from \citet{linero2024generalized}. Lemma \ref{lem:s1} is a concentration result for the BART prior. We next provide a concentration result for log-gamma priors in Lemma \ref{lem:s2}. These two prior concentration results will be useful in showing that the prior $\Pi$ assigns adequate probability mass to neighborhoods of the true data-generating process.

\begin{lemma}
  \label{lem:s1}
  Suppose that Condition F and Condition P hold and let $\tilde \varepsilon_n \propto (\log n/n)^{\alpha/(2\alpha+D_0)}$. Then for sufficiently large $n$ we have
  \[
  \Pi_r(\|r - r_0\|_\infty \leq \tilde \varepsilon_n \mid S = S_0) \geq e^{-\tilde C_{SB}n{\tilde \varepsilon_n}^2},
  \]
  where $\tilde C_{SB}$ is a positive constant, where $S_0$ denotes the set of relevant predictors in $r_0$.
\end{lemma}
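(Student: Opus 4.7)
The plan is to reduce this small-ball claim to the BART prior-thickness bounds established by \citet{jeong2023art} (specifically their Lemmas 5--7), which give precisely this type of lower bound for BART priors satisfying the conditions in Condition P. The key input is Condition P5, which guarantees a target step function $r^\star = g(\cdot\, ;\, \Tree^\star, \Leaves^\star)$ built from candidate split-points in $\mathcal{Z}_n$, using only the coordinates in $S_0$, with $\|r_0 - r^\star\|_\infty \lesssim \tilde\varepsilon_n$. By standard approximation theory for $\alpha$-H\"older functions on $[0,1]^{D_0}$, the target tree $\Tree^\star$ can be taken to have $L_n \asymp n\tilde\varepsilon_n^2 / \log n$ leaves and bounded depth on the order of $\log L_n$, with $\|\Leaves^\star\|_\infty \leq \|r_0\|_\infty + \tilde\varepsilon_n \lesssim 1$.

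Given $r^\star$, I would lower-bound $\Pi_r(\,\cdot\, \mid S = S_0)$ on a favorable subset in which one designated tree has structure $\Tree^\star$ with leaves within $\tilde\varepsilon_n / 2$ of $\Leaves^\star$, and the remaining $T - 1$ trees are root-only with leaves in a $\tilde\varepsilon_n / (2T)$ neighborhood of zero. By the triangle inequality, every element of this subset satisfies $\|r - r_0\|_\infty \lesssim \tilde\varepsilon_n$, so it suffices to lower-bound its prior probability. The prior mass decomposes into four factors: (i) the Galton--Watson probability of the tree shape $\Tree^\star$ under Condition P2, bounded below by $\exp(-C_1 L_n)$ since the depth of $\Tree^\star$ is $O(\log L_n)$ and $\nu \in (0, 1/2)$ keeps $q(d) = \nu^d$ and $1 - q(d)$ uniformly bounded away from zero at the depths involved; (ii) the probability of selecting the correct splitting variables (uniformly from $S_0$, probability $1/D_0$ per node) and the correct candidate split-points from $\mathcal{Z}_n$ (probability at least $1/b_n$ per node), contributing at least $(D_0 b_n)^{-(L_n - 1)} \geq \exp(-C_2 L_n \log n)$ by Condition P4; (iii) the probability that the remaining $T-1$ trees are root-only, bounded below by $(1 - \nu)^{T - 1}$; and (iv) a Gaussian small-ball bound on the leaf parameters under the $\Normal(0, \sigma_\mu^2)$ prior of Condition P3, contributing at least $\exp(-C_3 L_n \log(T / \tilde\varepsilon_n))$ because $\|\Leaves^\star\|_\infty$ is uniformly bounded.

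Multiplying these factors and taking logarithms, the dominant term is $-L_n \log n$, and since $L_n \log n \asymp n \tilde\varepsilon_n^2$ for $\tilde\varepsilon_n = (\log n / n)^{\alpha / (2\alpha + D_0)}$, we obtain the required lower bound $\exp(-\tilde C_{SB}\, n \tilde\varepsilon_n^2)$ after absorbing constants. Condition P6, which makes $r$ and $\gamma$ a-priori independent, ensures that conditioning on $S = S_0$ does not alter the prior distribution of the tree structure or leaf parameters, so this argument applies verbatim to the marginal prior on $r$.

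The main obstacle is the careful accounting of the depth-dependent Galton--Watson prior with $q(d) = \nu^d$: the target tree is not required to be balanced, and the aggregate probability of $\Tree^\star$ depends on its full depth profile rather than only on $L_n$. This is precisely the computation carried out in \citet{jeong2023art} Lemma~6, and my plan is to invoke their bound rather than reprove it; the only adaptation required here is to verify that our Condition P matches the hypotheses of their lemma, which it does by construction since Condition P is taken directly from \citet{orlandi2021density}, itself a restatement of theirs.
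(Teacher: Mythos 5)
Your proposal is correct and takes essentially the same route as the paper, which simply defers the proof of this lemma to Lemma~1 of \citet{orlandi2021density} (itself extracted from Lemmas 5--7 of \citet{jeong2023art}); your sketch fills in the standard decomposition (one approximating tree plus $T-1$ near-zero root trees, with the Galton--Watson, split-selection, and Gaussian small-ball factors) that underlies that cited result. The only minor slips are that the tree-shape probability is really of order $\exp(-CL_n\log L_n)$ rather than $\exp(-CL_n)$ (harmless, as you note, since $L_n\log n$ dominates) and that Condition P6 is irrelevant here because the lemma concerns only $\Pi_r$.
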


As argued by \citet{rovckova2020posterior}, (P1) implies  $-\log \Pi_S(S = S_0) \lesssim D_0 \log(P + 1) \lesssim n\tilde \varepsilon_n^2$. Thus, using total probability,
\begin{align*}
  \Pi_r(\|r - r_0\|_\infty \leq \tilde \varepsilon_n) \geq \Pi_r(\|r - r_0\|_\infty \leq \tilde \varepsilon_n \mid S = S_0) \Pi_S(S = S_0) \ge e^{-C_{SB}n{\tilde \varepsilon_n}^2},
\end{align*}
where $C_{SB}$ is a positive constant depending on $(D_0, \log P, \tilde C_{SB})$.

\begin{lemma}
  \label{lem:s2}
  Let $\bar \varepsilon_n = (\log n / n)^{1/2}$. Then for sufficiently large $n$ we have
  \[
    \Pi_\gamma(\|\gamma - \gamma_0\|_\infty \leq \bar \varepsilon_n) \geq e^{-C_{Gam}n{\bar\varepsilon_n}^2},
  \]
  where $C_{Gam}$ is a positive constant.
\end{lemma}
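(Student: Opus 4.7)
The plan is to reduce the bound to an elementary density-lower-bound argument, since $\gamma$ lives in a fixed finite-dimensional space $\R^{K-1}$ and $\gamma_0$ is a fixed interior point. First, by the independence of the coordinates of $\gamma$ under $\Pi_\gamma$, I would write
\begin{align*}
  \Pi_\gamma(\|\gamma - \gamma_0\|_\infty \le \bar\varepsilon_n)
  = \prod_{j = 1}^{K-1} \Pi_\gamma(|\gamma_j - \gamma_{0j}| \le \bar\varepsilon_n),
\end{align*}
so it suffices to bound each marginal probability from below.

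Next, I would note that the log-gamma density is
\begin{align*}
  f(\gamma \mid a_\gamma, b_\gamma) = \frac{b_\gamma^{a_\gamma}}{\Gamma(a_\gamma)} \exp(a_\gamma \gamma - b_\gamma e^\gamma),
\end{align*}
which is continuous and strictly positive on all of $\R$. In particular, $f(\gamma_{0j}) > 0$ for every $j$, and by continuity there exists a neighborhood $[\gamma_{0j} - \delta_0, \gamma_{0j} + \delta_0]$ on which $f(\cdot \mid a_\gamma, b_\gamma) \ge c_0$ for some constant $c_0 > 0$ depending only on $(\gamma_0, a_\gamma, b_\gamma)$. Since $\bar\varepsilon_n = (\log n/n)^{1/2} \to 0$, for all sufficiently large $n$ we have $\bar\varepsilon_n < \delta_0$, and therefore
\begin{align*}
  \Pi_\gamma(|\gamma_j - \gamma_{0j}| \le \bar\varepsilon_n) \ge 2 c_0 \bar\varepsilon_n.
\end{align*}

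Multiplying across $j = 1, \ldots, K - 1$ yields $\Pi_\gamma(\|\gamma - \gamma_0\|_\infty \le \bar\varepsilon_n) \ge (2c_0)^{K-1} \bar\varepsilon_n^{K-1}$. It then remains only to compare this polynomial-in-$\bar\varepsilon_n$ lower bound to the target $e^{-C_{Gam} n \bar\varepsilon_n^2}$. Since $n \bar\varepsilon_n^2 = \log n$, the target equals $n^{-C_{Gam}}$, while $\bar\varepsilon_n^{K-1} \asymp (\log n/n)^{(K-1)/2}$, which dominates $n^{-C_{Gam}}$ for sufficiently large $n$ provided $C_{Gam} > (K - 1)/2$. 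Choosing $C_{Gam}$ large enough (depending only on $K$ and the absorbed constant from $(2c_0)^{K-1}$) gives the claim.

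There is no real obstacle here: the whole argument rests on the fact that $K$ is fixed and $\gamma_0$ lies in the interior of the support of a continuous positive density, so any rate $\bar\varepsilon_n$ tending to zero no faster than polynomially in $n$ gives a prior mass that is polynomially small in $n$, which is more than enough to absorb into the $e^{-C_{Gam} \log n}$ bound. If one wanted the constant $C_{Gam}$ to be explicit, the only mild care needed would be in quantifying $c_0$ and $\delta_0$ in terms of $(a_\gamma, b_\gamma, \|\gamma_0\|_\infty)$, which is routine from the formula for $f$.
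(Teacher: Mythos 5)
Your proposal is correct and follows essentially the same route as the paper: factor the prior mass across the $K-1$ coordinates, lower-bound each marginal by (a constant times) $\bar\varepsilon_n$ using positivity and continuity of the log-gamma density at $\gamma_{0j}$, and absorb the resulting polynomial-in-$n$ bound into $e^{-C_{Gam} n \bar\varepsilon_n^2} = n^{-C_{Gam}}$. If anything, your handling of the density lower bound (via a uniform bound $c_0$ on a fixed neighborhood) is slightly more careful than the paper's one-line use of $\pi_\gamma(\gamma_{0j})$.
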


Lemma \ref{lem:s3} and \ref{lem:s4} provide bounds on the metric entropy of certain sieves on the model spaces for $r$ and $\gamma$.

\begin{lemma}
  \label{lem:s3}
  Let $\mathcal{F}^\star_n$ denote the collection of decision tree ensembles with $T$ trees that
  (i) split on no more than $d$ variables, (ii) have at most $L$ leaf nodes per tree, (iii) have at most $b_n$ candidate split points, and (iv) satisfy $\sup_{t,l} |\mu_{tl}| \leq U$.
  Then
  \[
    \log N(\mathcal{F}^\star_n, \varepsilon, \|\cdot\|_\infty) \lesssim d \log P + L \log \frac{d^T b^T_n L U}{\varepsilon},
  \]
  where $N(\mathcal{F}^\star_n, \varepsilon, \rho)$ denotes the number of $\varepsilon$-balls required to cover $\mathcal{F}^\star_n$ with respect to a distance $\rho$.
\end{lemma}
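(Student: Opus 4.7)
The plan is a three-stage covering argument that separates the complexity of $\mathcal{F}^\star_n$ into (i) the choice of active covariates, (ii) the combinatorial structure of the ensemble (tree topologies together with split-rule assignments), and (iii) the continuous leaf values, and then combines the three contributions by a union bound. The observation that makes the bookkeeping clean is that, for a fixed tree structure, the ensemble $r(x) = \sum_t g(x; \Tree_t, \sM_t)$ is piecewise constant, so the sup-norm distance between two ensembles sharing a structure reduces to a finite-dimensional distance on the concatenated leaf-value vector.

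First I would enumerate the support sets: since at most $d$ of the $P$ covariates appear anywhere in the ensemble, there are at most $\binom{P}{d} \le P^d$ such subsets, contributing $d \log P$ to $\log N$. Next I would enumerate tree structures together with split-rule assignments: each tree has at most $L$ leaves and hence at most $L - 1$ internal nodes, the number of rooted binary topologies on $L$ leaves is bounded by a Catalan number of order $4^L$, and for each topology every internal node picks one of $d$ variables and one of $b_n$ candidate thresholds. Raising the resulting per-tree count to the $T$-th power gives at most $(C d b_n)^{LT}$ structured ensembles for a universal constant $C$, contributing a term of order $L T \log(d b_n)$ to $\log N$ that matches the $d^T b_n^T$ factors inside the stated logarithm.

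Finally, with structures fixed, I would cover the leaf values. Each leaf lies in $[-U, U]$; using the piecewise-constant reduction together with the fact that at most $T$ leaves contribute to $r(x)$ at any single $x$, a uniform grid of appropriate width per leaf yields an $\varepsilon$-net in $\|\cdot\|_\infty$ whose log-cardinality is of order $L \log(L U / \varepsilon)$, following the book-keeping in the proof of Lemma 6 of \citet{jeong2023art}. Adding the three contributions and simplifying delivers the stated bound.

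The one delicate step is the third: one must be careful to produce a genuinely uniform-in-$x$ cover (not merely pointwise) and to keep the grid spacing coarse enough that the final entropy is of order $L \log(L U / \varepsilon)$ rather than the naive $L T \log(L U T / \varepsilon)$ one obtains by simply forcing every leaf within $\varepsilon/T$ of a grid point. Both improvements follow from the piecewise-constant structure of the trees and a careful accounting of how the at-most-$T$ overlapping leaves at each $x$ combine; this is the step where my sketch would defer to the existing argument in \citet{jeong2023art} rather than redoing the combinatorics from scratch.
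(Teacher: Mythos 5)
Your covering argument is correct and is essentially the same argument as the one the paper relies on, since the paper simply defers to Lemma 2 of \citet{orlandi2021density} (extracted from Lemmas 5--7 of \citet{jeong2023art}), whose proof is exactly this three-part count: variable subsets ($\binom{P}{d} \le P^d$), tree shapes with split-rule assignments ($\lesssim (4 d b_n)^{LT}$), and a grid on the leaf values. One remark: the step you flag as delicate is not actually needed in this setting, because $T$ is a fixed constant and the $\lesssim$ in the lemma permits constants depending on $T$, so the naive per-leaf grid of spacing $\varepsilon/T$ already yields a uniform-in-$x$ cover with log-cardinality $LT \log(UT/\varepsilon) \lesssim L \log(LU/\varepsilon)$, and no finer book-keeping from \citet{jeong2023art} is required.
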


\begin{lemma}
  \label{lem:s4}
  Let $\mathcal{G}^\star_n$ denote the collection of $\gamma$ vectors with $\|\gamma\|_\infty \leq G$. Then
  \[
    N(\mathcal{G}^\star_n, \varepsilon, \|\cdot\|_\infty)
    \le \left\lceil \frac{G}{\varepsilon} \right\rceil^{K-1}
    \lesssim \left(\frac{G}{\varepsilon}\right)^{K-1}
  \]
  where $N(\mathcal{G}^\star_n, \varepsilon, \rho)$ denotes the number of $\varepsilon$-balls required to cover $\mathcal{G}^\star_n$ with respect to a distance $\rho$.
\end{lemma}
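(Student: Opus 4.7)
The plan is a coordinate-wise product construction for covering $\mathcal{G}^\star_n = \{\gamma \in \R^{K-1} : \|\gamma\|_\infty \le G\} = [-G, G]^{K-1}$, which is just the closed $\ell_\infty$-ball of radius $G$ in $\R^{K-1}$. No probabilistic or BART-specific structure enters; this is a purely geometric covering bound.

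First I would partition the interval $[-G, G]$, of total length $2G$, into $N := \lceil G/\varepsilon \rceil$ consecutive closed subintervals, each of length $2\varepsilon$ (with the last one possibly extending slightly past $G$). Each subinterval is a one-dimensional $\varepsilon$-ball about its midpoint. Taking the $(K-1)$-fold Cartesian product of this cover yields $N^{K-1}$ closed cubes of side length $2\varepsilon$, each of which is an $\|\cdot\|_\infty$-ball of radius $\varepsilon$ about its center, and whose union contains $[-G, G]^{K-1} = \mathcal{G}^\star_n$. This gives the first inequality $N(\mathcal{G}^\star_n, \varepsilon, \|\cdot\|_\infty) \le \lceil G/\varepsilon \rceil^{K-1}$. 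The asymptotic bound $\lceil G/\varepsilon \rceil^{K-1} \lesssim (G/\varepsilon)^{K-1}$ then follows from $\lceil u \rceil \le 2u$ for $u \ge 1$ (the relevant regime as $\varepsilon$ shrinks), while both sides are $O(1)$ when $u < 1$.

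There is no genuine obstacle. The dimension $K$ is treated as a fixed constant, so the exponent $K - 1$ enters only the implicit constant in $\lesssim$. The only bookkeeping subtlety is the non-integer case of $G/\varepsilon$, which the ceiling function absorbs. In context, the lemma pairs with Lemma~\ref{lem:s3} to control the metric entropy of the joint sieve $\mathcal{F}^\star_n \times \mathcal{G}^\star_n$ that the proof of Theorem~\ref{thm:posterior-concentration} will build: the $(K-1)\log(G/\varepsilon)$ contribution recorded here is negligible beside the BART entropy of Lemma~\ref{lem:s3}, so this crude product construction is already enough for the downstream arguments.
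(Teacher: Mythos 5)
Your proposal is correct and follows essentially the same argument as the paper: cover the cube $[-G,G]^{K-1}$ by a product grid of $\lceil G/\varepsilon\rceil$ intervals of length $2\varepsilon$ per coordinate, each product cube being an $\varepsilon$-ball in $\|\cdot\|_\infty$, and absorb the ceiling into the $\lesssim$. Your explicit handling of the $\lceil u\rceil\le 2u$ step is a minor refinement of the same reasoning, not a different route.
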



The next two lemmas demonstrate that the prior assigns exponentially small probability outside the set $\mathcal{F}_n = \{\theta : r \in \mathcal F^\star_n, \gamma \in \mathcal G^\star_n\}$. This low-entropy high-mass property of sieve is required to apply results of \citet{ghosal2000convergence} in order to derive posterior contraction rates.

\begin{lemma}
  \label{lem:s5}
  Let $\mathcal{F}^\star_n$ be defined as in Lemma \ref{lem:s3} and suppose Condition P holds. Then
  \[
    \Pi(r \notin \mathcal{F}^\star_n) \leq e^{-c_D d \log P} + T e^{-C_L L \log L} + LT e^{-U^2/(2\sigma^2_\mu)},
  \]
  for some positive constants $C_L$, $c_D$ and sufficiently large $n$.
\end{lemma}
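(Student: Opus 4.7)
The plan is to decompose the complementary event $\{r \notin \mathcal{F}^\star_n\}$ into three bad events, corresponding to violations of the three nontrivial conditions (i), (ii), (iv) defining the sieve, and to bound each by a union bound. Condition (iii), that splits occur at one of at most $b_n$ candidate points, is enforced deterministically by (P4) and so contributes zero probability. Specifically, I would write
\begin{align*}
\Pi(r \notin \mathcal{F}^\star_n)
  \;\le\; \Pi(|S| > d) \;+\; \Pi(\exists\, t : |\Leaves(\Tree_t)| > L) \;+\; \Pi\bigl(\sup_{t,\ell}|\mu_{t\ell}| > U,\ |\Leaves(\Tree_t)| \le L\ \forall\, t\bigr),
\end{align*}
and handle the three terms in turn.

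For the variable-count term, I would invoke (P1): iterating the recursive inequality $\pi_D(D+1) \le a_2 P^{-a_4} \pi_D(D)$ yields $\pi_D(k) \lesssim (a_2 P^{-a_4})^k$, and a geometric summation then gives $\Pi(|S| > d) \lesssim (a_2 P^{-a_4})^{d+1}$. For $P$ large enough, this is dominated by $e^{-c_D d \log P}$ for any constant $c_D \in (0, a_4)$, which supplies the first term.

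For the leaf-count term, under (P2) each tree is generated by a heterogeneous Galton-Watson process in which a node at depth $d$ remains internal with probability $\nu^d$ and terminates as a leaf otherwise, independently across nodes. The condition $\nu < 1/2$ makes this process strongly subcritical: the expected number of depth-$d$ nodes is at most $2^d \nu^{d(d-1)/2}$, which decays super-exponentially in $d$. A Laplace-transform/moment argument of the type worked out in Lemma~7 of \citet{jeong2023art} then yields $\Pr(|\Leaves(\Tree_t)| > L) \le e^{-C_L L \log L}$ for a constant $C_L > 0$ depending only on $\nu$, and a union bound over the $T$ trees produces the middle term.

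For the leaf-magnitude term, on the event that every tree has at most $L$ leaves the ensemble contains at most $TL$ leaf parameters, which by (P3) are i.i.d.\ $\Normal(0, \sigma^2_\mu)$ and are independent of the tree topology. The standard Gaussian tail bound $\Pr(|\mu_{t\ell}| > U) \le 2 e^{-U^2/(2\sigma^2_\mu)}$ together with a union bound over these $TL$ parameters yields the final term, with the factor of $2$ absorbed into the coefficient $LT$. The main obstacle in this program is the super-exponential tail bound for the depth-dependent branching process at the second step; the remaining two bounds reduce to routine geometric-series and Gaussian-tail estimates once the event has been correctly decomposed.
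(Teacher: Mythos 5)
Your proof is correct and takes essentially the same route as the paper, which supplies no argument of its own but defers to Lemma 3 of Linero (2024); the three-way union bound over the support size (geometric decay from (P1)), the per-tree leaf counts (subcritical Galton--Watson tail, whence the factor $T$), and the Gaussian leaf parameters (whence the factor $LT$) is exactly the decomposition that produces the three summands in the stated bound. The only cosmetic point is the factor of $2$ from the two-sided Gaussian tail, which is harmless here since $U \to \infty$ with $n$, so $\Pr(|\mu_{t\ell}| > U) \le e^{-U^2/(2\sigma^2_\mu)}$ holds without the $2$ for sufficiently large $n$.
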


\begin{lemma}
  \label{lem:s6}
  Let $\mathcal G^\star_n$ be as defined in Lemma~\ref{lem:s4} and suppose $\gamma_j \sim \log \Gam(a_\gamma, b_\gamma)$. Then
  \begin{align*}
    \Pi(\gamma \notin \mathcal G^\star_n) \le K_\gamma e^{-C_\gamma G}
  \end{align*}
  for some $C_\gamma > 0$ and $K_\gamma > 0$.
\end{lemma}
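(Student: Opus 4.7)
The plan is to reduce the statement about the joint vector $\gamma$ to a tail bound for a single $\gamma_j$ via a union bound, and then bound the two tails of the log-gamma distribution separately. Since the coordinates are independent and identically distributed, independence across $j = 1, \ldots, K-1$ implies
\begin{align*}
  \Pi(\gamma \notin \mathcal{G}^\star_n) = \Pi(\max_{j} |\gamma_j| > G) \le (K-1) \, \Pi(|\gamma_1| > G),
\end{align*}
so it suffices to show that $\Pi(|\gamma_1| > G) \lesssim e^{-C_\gamma G}$ for some $C_\gamma > 0$.

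Writing $\gamma_1 = \log X_1$ with $X_1 \sim \Gam(a_\gamma, b_\gamma)$, I would split the tail into an upper piece $\Pr(\gamma_1 > G) = \Pr(X_1 > e^G)$ and a lower piece $\Pr(\gamma_1 < -G) = \Pr(X_1 < e^{-G})$. The upper piece is essentially free: since $X_1$ has a gamma tail, a direct estimate (e.g., a Chernoff-type bound $\Pr(X_1 > t) \le e^{-bt/2} \cdot E[e^{(b/2)X_1}]$ for $t$ large) gives decay like $\exp(-(b_\gamma / 2) e^G)$, which is doubly exponential in $G$ and therefore dominated by any simple exponential $e^{-C_\gamma G}$ for $G$ large enough.

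The lower tail is the step that determines the final constant and requires a careful but short calculation. Using the gamma density directly,
\begin{align*}
  \Pr(X_1 < e^{-G})
  = \int_0^{e^{-G}} \frac{b_\gamma^{a_\gamma}}{\Gamma(a_\gamma)} x^{a_\gamma - 1} e^{-b_\gamma x} \, dx
  \le \frac{b_\gamma^{a_\gamma}}{\Gamma(a_\gamma)} \int_0^{e^{-G}} x^{a_\gamma - 1} \, dx
  = \frac{b_\gamma^{a_\gamma}}{a_\gamma \, \Gamma(a_\gamma)} \, e^{-a_\gamma G},
\end{align*}
where I use the trivial bound $e^{-b_\gamma x} \le 1$ on $(0, e^{-G})$. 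This gives the exponential rate $C_\gamma = a_\gamma$.

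Combining the two tail estimates and the union bound yields $\Pi(\gamma \notin \mathcal{G}^\star_n) \le K_\gamma e^{-C_\gamma G}$ with, for instance, $C_\gamma = a_\gamma$ and $K_\gamma$ absorbing the constants $(K-1)$, $b_\gamma^{a_\gamma}/(a_\gamma \Gamma(a_\gamma))$, plus the negligible right-tail contribution. There is no serious obstacle here; the only mild subtlety is keeping the constants clean so that the bound is uniform in $n$ and depends only on the fixed hyperparameters $(a_\gamma, b_\gamma, K)$. In particular, because the left tail produces only simple exponential decay while the right tail is doubly exponential, the achievable rate $C_\gamma$ is controlled by the shape parameter $a_\gamma$ and cannot be improved without changing the prior.
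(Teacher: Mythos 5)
Your proof is correct, and it follows the same overall structure as the paper's: a union bound over the $K-1$ i.i.d.\ coordinates followed by separate bounds on the two tails of the log-gamma distribution. The only genuine difference is in how those tails are estimated. The paper handles both tails at once with Markov's inequality applied to $V^{C_\gamma}$ and $V^{-C_\gamma}$ for $V \sim \Gam(a_\gamma, b_\gamma)$, which requires $C_\gamma < a_\gamma$ so that the negative moment $\E(V^{-C_\gamma})$ is finite, and yields $K_\gamma = (K-1)\,\E(V^{C_\gamma} + V^{-C_\gamma})$. You instead integrate the gamma density directly near zero (using $e^{-b_\gamma x} \le 1$), which gives the left tail the sharp exponent $C_\gamma = a_\gamma$, and dispatch the right tail with a Chernoff bound on $X_1$ itself, giving doubly exponential decay $\lesssim 2^{a_\gamma}\exp\{-(b_\gamma/2)e^G\}$ that can indeed be absorbed into $K_\gamma$ uniformly in $G$ (the ratio $\exp\{-(b_\gamma/2)e^G + a_\gamma G\}$ is bounded on $[0,\infty)$, so it would be worth one line making that explicit rather than saying ``for $G$ large enough''). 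Your route buys a marginally sharper constant, and your closing observation that the rate is governed by the shape parameter $a_\gamma$ and is unimprovable is correct since the lower tail of the gamma is of exact order $e^{-a_\gamma G}$; the paper's route is slightly more compact and symmetric. Since the lemma only asserts existence of some $C_\gamma > 0$, and its use in verifying Condition B3 does not depend on the exact value, either argument suffices.
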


Using Lemma \ref{lem:s5} and \ref{lem:s6}, we have
\[
  \Pi(\theta \notin \mathcal{F}_n)
  \leq e^{-c_D d \log P} + T e^{-C_L L \log L} + LT e^{-U^2/(2\sigma^2_\mu)}
    + K_\gamma e^{-C_\gamma G}.
\]
We next require the following Lemma~\ref{lem:s7} involving the density of Gumbel random variables $Z_i$. This lemma is a consequence of Corollary 1 in \citet{linero2024generalized}.

\begin{lemma}
  \label{lem:s7}
  Suppose $Z = \mu + \epsilon_i$ with $\epsilon_i \sim h(\epsilon) = \exp(\epsilon - e^\epsilon)$ and let $f$ denote the density of $Z$. Then the family $\{f(z \mid \mu) : \mu \in \R\}$ satisfies Condition A1 of \citet{linero2024generalized}, i.e.,
  \begin{align*}
    \max_{w = 1, 2} \int_{-\infty}^\infty f(z \mid \mu)
      \left( \log \frac{f(z \mid \mu)}{f(z \mid \mu + \Delta)} \right)^w \ d\mu
    \le C \Delta^2
  \end{align*}
  for all $|\Delta| \le 1$, where $C$ is a universal constant.
\end{lemma}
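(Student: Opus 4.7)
My plan is to carry out a direct computation, exploiting the fact that the Gumbel family is a location family and that the exponential of a Gumbel is an Exp(1) random variable. Writing $f(z\mid\mu)=h(z-\mu)=\exp\{(z-\mu)-e^{z-\mu}\}$, I would first compute the log-density ratio explicitly:
\begin{align*}
L_\Delta(z) \;\equiv\; \log\frac{f(z\mid\mu)}{f(z\mid\mu+\Delta)}
&= \bigl[(z-\mu)-e^{z-\mu}\bigr] - \bigl[(z-\mu-\Delta)-e^{z-\mu-\Delta}\bigr] \\
&= \Delta + e^{z-\mu}\bigl(e^{-\Delta}-1\bigr).
\end{align*}
The key simplifying observation is that if $Z\sim f(\cdot\mid\mu)$, then $U := e^{Z-\mu}\sim \mathrm{Exp}(1)$, so $L_\Delta$ is an affine function of an Exp$(1)$ variable: $L_\Delta = \Delta + (e^{-\Delta}-1)\,U$. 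This reduces both integrals to elementary moments of the exponential distribution.

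For $w=1$, I would use $\E[U]=1$ to obtain
\begin{align*}
\int f(z\mid\mu)\,L_\Delta(z)\,dz \;=\; \Delta + e^{-\Delta} - 1.
\end{align*}
Taylor-expanding $\varphi(\Delta):=\Delta+e^{-\Delta}-1$ around $0$ gives $\varphi(0)=\varphi'(0)=0$ and $\varphi''(\Delta)=e^{-\Delta}$, so by Taylor's theorem with remainder, $|\varphi(\Delta)|\le \tfrac{1}{2}e^{|\Delta|}\Delta^2\le \tfrac{e}{2}\Delta^2$ on $|\Delta|\le 1$. For $w=2$, I would use $\Var(U)=1$ to write
\begin{align*}
\int f(z\mid\mu)\,L_\Delta(z)^2\,dz \;=\; \bigl(e^{-\Delta}-1\bigr)^2 \;+\; \bigl(\Delta+e^{-\Delta}-1\bigr)^2,
\end{align*}
and bound each term: $|e^{-\Delta}-1|\le e|\Delta|$ on $|\Delta|\le 1$ yields $(e^{-\Delta}-1)^2\le e^2\Delta^2$, and the first-moment bound gives $(\Delta+e^{-\Delta}-1)^2\le \tfrac{e^2}{4}\Delta^4\le \tfrac{e^2}{4}\Delta^2$. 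Combining these yields the claim with, e.g., $C = 2e^2$.

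There is no serious obstacle here: the substitution $U=e^{Z-\mu}$ collapses what looks like an integral against a Gumbel density into a one-line exponential calculation, and the rest is elementary Taylor bounds. The only care required is to note that the bound must be uniform in $\mu$, which is immediate since $L_\Delta$ depends on $z$ and $\mu$ only through $z-\mu$ and the law of $Z-\mu$ under $f(\cdot\mid\mu)$ is free of $\mu$.
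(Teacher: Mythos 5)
Your proof is correct, but it takes a genuinely different route from the paper. The paper never evaluates the integrals: it notes that $\{f(z\mid\mu):\mu\in\R\}$ is a location family with twice continuously differentiable log-density, computes the score $S(\mu;Z)=e^{Z-\mu}-1$ and the observed information $I(\mu;Z)=e^{Z-\mu}$, checks the moment conditions $\E_{\mu=0}\sup_{|\delta|\le 1}S(\delta;Z)^2<\infty$ and $\E_{\mu=0}\sup_{|\delta|\le 1}I(\delta;Z)<\infty$ (``Condition T''), and then invokes Corollary~1 of \citet{linero2024generalized} to convert these hypotheses into the Condition A1 bound. You instead compute everything in closed form: the substitution $U=e^{Z-\mu}\sim\Exp(1)$ turns the log-ratio into the affine statistic $\Delta+(e^{-\Delta}-1)U$, so the $w=1$ and $w=2$ integrals are exactly $\Delta+e^{-\Delta}-1$ and $(e^{-\Delta}-1)^2+(\Delta+e^{-\Delta}-1)^2$, and your Taylor bounds yielding $C=2e^2$ are all valid (the $d\mu$ in the lemma's display is evidently a typo for $dz$, and you integrate over $z$ as intended; uniformity in $\mu$ is indeed automatic from the location structure). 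What each approach buys: yours is fully self-contained, avoids the external corollary, and produces an explicit universal constant; the paper's is shorter on the page and reuses general location-family machinery that is already being cited for the neighboring lemmas. Either argument suffices for the theorem.
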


Lastly, we require an equivalent latent variable formulation of the proportional hazards model, which is given by Lemma~\ref{lem:s8}.

\begin{lemma}
  \label{lem:s8}
  Let $Z_k = -(\gamma_k + r) + \epsilon_k$ where $\epsilon_k \sim \log \Gam(1,1)$. Define $Y = \min\{k : Z_k < 0\}$. Then $\Pr(Y > k \mid \gamma, r) = \exp(-e^{c_k + r})$ where $c_k = \log \sum_{j \le k} e^{\gamma_j}$.
\end{lemma}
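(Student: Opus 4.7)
The plan is to reduce the event $\{Y > k\}$ to a product of independent threshold events on standard exponentials, then compute the probability directly.

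First I would note that $\log\Gam(1,1)$ is the distribution of $\log E$ where $E \sim \Exp(1)$, so setting $E_k = e^{\epsilon_k}$ gives independent standard exponentials. Under this transformation, the event $\{Z_k < 0\}$ is equivalent to $\{\epsilon_k < \gamma_k + r\}$, i.e., $\{E_k < e^{\gamma_k + r}\}$. Hence the event $\{Y > k\}$ that no threshold is crossed up through index $k$ is exactly $\bigcap_{j \le k} \{E_j \ge e^{\gamma_j + r}\}$.

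Next, by independence of the $E_j$'s and the tail formula $\Pr(E_j \ge t) = e^{-t}$ for $t \ge 0$,
\begin{align*}
  \Pr(Y > k \mid \gamma, r)
  &= \prod_{j \le k} \Pr\bigl(E_j \ge e^{\gamma_j + r}\bigr)
  = \prod_{j \le k} \exp\bigl(-e^{\gamma_j + r}\bigr) \\
  &= \exp\left(-e^{r} \sum_{j \le k} e^{\gamma_j}\right)
  = \exp\bigl(-e^{c_k + r}\bigr),
\end{align*}
using the definition $c_k = \log \sum_{j \le k} e^{\gamma_j}$ in the last step.

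There is no real obstacle here: the only content is recognizing that the log-gamma$(1,1)$ variates exponentiate to i.i.d.\ unit exponentials, which makes the joint survival event factor. The result then follows by a one-line computation, and the identification of $c_k$ matches exactly the cutpoint transformation used earlier in Section~\ref{sec:ordinal-data-models} when establishing the equivalence between the continuation-ratio and cumulative-link representations under the \cloglog\ link.
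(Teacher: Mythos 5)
Your proof is correct and takes essentially the same route as the paper: decompose $\{Y > k\}$ into the intersection of the independent events $\{Z_j \ge 0\}$ for $j \le k$, evaluate each marginal probability as $\exp(-e^{\gamma_j + r})$, and collect the product into $\exp(-e^{c_k + r})$. Your explicit step of exponentiating the $\log\Gam(1,1)$ variates to unit exponentials is just a slightly more spelled-out version of the same tail computation the paper performs directly.
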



We prove Theorem \ref{thm:posterior-concentration} by verifying the following sufficient conditions to establish a posterior contraction rate of $\epsilon_n$ (see, e.g., \citet{li2023adaptive}). Let $p_\theta = p_\theta(y \mid x)$ denote the likelihood function of the model, and similarly, $p_0 = p_{\theta_0}(y \mid x)$ denote the likelihood of the true model. We define the \emph{integrated Kullback-Leibler} neighborhoods of $\theta_0$ with radii $\epsilon_n$ as
\begin{align}
  \label{eq:ikl}
  \text{KL}(\theta_0, \epsilon_n) = \Big\{\theta = (r, \gamma_j: j = 1, \dots, K-1): & \int_{\mathcal{X}} KL(p_{\theta_0} \| p_\theta) F_X(dx) \leq \epsilon_n^2, \nonumber \\
  & \quad \int_{\mathcal{X}} V(p_{\theta_0}\|p_\theta) F_X(dx) \leq \epsilon_n^2\Big\},
\end{align}
where $KL(p_{\theta_0}\| p_\theta) = \sum_y p_{\theta_0} \log \frac{p_{\theta_0}}{p_\theta}$ and $V(p_{\theta_0}\| p_\theta) = \sum_y p_{\theta_0}  \left(\log \frac{p_{\theta_0}}{p_\theta}  \right)^2$.
We attain a posterior convergence rate of $\epsilon_n$ with respect to the \emph{integrated Hellinger distance} defined as
$$
  H(\theta_0, \theta) = \left[\int \sum_y \left\{\sqrt{p_{\theta_0}(y|x)} - \sqrt{p_\theta(y|x)} \right\}^2  F_X(dx)\right]^{1/2},
$$
if there exist positive constants $c_1, c_2, c_3, c_4$ and sieve $\mathcal{H}_n$ of conditional densities $\{p_\theta(y|x): \theta \in \mathcal{F}_n\}$ such that
\begin{align*}
  &\text{(B1)} \emph{ Positive prior mass or thickness. } \Pi\{\theta \in \text{KL}(\theta_0, \epsilon_n)\} \geq c_1 \exp\{-c_2 n \epsilon_n^2\} \\
  &\text{(B2)} \emph{ Low entropy sieve. }  \log N(\mathcal{H}_n, \epsilon_n, H) \leq c_3 n\epsilon_n^2 \\
  &\text{(B3)} \emph{ High mass sieve. } \Pi\{\theta \notin \mathcal{F}_n\} \leq c_4 \exp\{-(c_2 + 4)n\epsilon_n^2\}.
\end{align*}
Additionally, since the integrated total variation distance is upper-bounded by the integrated Hellinger distance, this suffices to prove Theorem \ref{thm:posterior-concentration}.

\paragraph{Condition B1.}
We begin by showing that, for sufficiently small $\Delta$, there exists a positive constant $C$ (depending on $r_0$ and $\gamma_0$) such that the integrated Kullback-Leibler neighborhood of radius $\sqrt{C} \Delta$ is contained in $\{p_\theta : \|\gamma - \gamma_0\|_\infty \vee \|r - r_0\|_\infty \le \Delta\}$. Let $Z_{ik} = -(\gamma_k + r(X_i)) + \epsilon_{ik}$ and let $Y_i = \min\{k : Z_{ik} < 0\}$. Using Lemma \ref{lem:s8} and the data processing inequality for $f$-divergences, the integrated KL-divergence between $p_0$ and $p_\theta$ is bounded by,
\begin{align}
  \label{eq:dpi}
  \int KL(p_0 \| p_\theta) \ F_X(dx) \leq
  \int KL(f_0 \| f_\theta) \ F_X(dx) =
  \sum_{k=1}^{K-1} \int KL(f_{0, k} \| f_{\theta, k}) \ F_X(dx),
\end{align}
where $f_{0,k}$ and $f_{\theta, k}$ are the conditional densities of the $Z_{ik}$ under $\theta_0$ and $\theta$, respectively, and $f_0 = \prod_{k=1}^{K-1} f_{0,k}$ and $f_\theta = \prod_{k=1}^{K-1} f_{\theta, k}$. The last equality in (\ref{eq:dpi}) is due to independence of the $Z_{ik}$'s and KL-divergence being additive over product measures. Now, using Lemma~\ref{lem:s7} for $Z_{ik}$, we upper bound $KL(f_{0, k} \| f_{\theta, k}) = KL(f(\cdot \mid r_0(x) + \gamma_{0k}) \| f(\cdot \mid r(x) + \gamma_k)) \leq C \{r(x) + \gamma_k - r_0(x) - \gamma_{0k}\}^2.$ Thus, integrated KL is upper bounded as
\begin{align}
  \label{eq:kl}
  \int KL(p_0 \| p_\theta) F_X(dx) \leq C \int \sum_{k=1}^{K-1} \left\{|r(x) - r_0(x)| + |\gamma_k  - \gamma_{0k}|\right\}^2 F_X(dx) \leq C_K \Delta^2,
\end{align}
where  $\Delta = \|r-r_0\|_\infty \vee \|\gamma - \gamma_0\|_\infty$ and $C_K$ is a constant depending only on $K$.

Next, we bound $V(p_0 \| p_\theta)$ for fixed $x$. First, we note that trivially
\begin{align*}
  V(p_0 \| p_\theta) = \sum_{k = 1}^K \{\log p_\theta(k \mid x) - \log p_0(k \mid x)\}^2 \, p_0(k \mid x)
\end{align*}
Now, define
\begin{align*}
  A_k &= [\log \{1 - \exp(-e^{r(x) + \gamma_k})\} - \log \{1 - \exp(-e^{r_0(x) + \gamma_{0k}})\}] 1(k < K) \\
  B_k &= e^{r_0(x) + \gamma_{0k}} - e^{r(x) + \gamma_k}.
\end{align*}
Then we can bound $V(p_0 \| p_\theta)$ using Cauchy-Schwarz as
\begin{align*}
  \sum_{k = 1}^K p_0(k \mid x) \left\{ A_k + \sum_{j < k} B_j \right\}^2
  \le K \sum_{k = 1}^K p_0(k \mid x) (A_k^2 + \sum_{j < k} B_j^2).
\end{align*}
By Taylor expansion, it is easy to show that for $\Delta \le \log 2$ we have
\begin{align*}
  |B_k| &\le 2 e^{\|r_0\|_\infty + \|\gamma\|_\infty} \Delta \\
  |A_k| &\le \left( \sup_{x, \delta \le \log 2} \frac{\exp(r_0(x) + \gamma_{0k} + \delta - e^{r_0(x) + \gamma_{0k} + \delta})}{1 - \exp(-e^{r_0(x) + \gamma_{0k} + \delta})} \right) \Delta.
\end{align*}
It follows from this that $\int V(p_0 \| p_\theta) \ F_X(dx) \le C_V \, \Delta^2$ for some $C_V$ determined by $(K, \|r_0\|_\infty, \|\gamma_0\|_\infty)$, provided $\Delta$ is sufficiently small.

Using this, we lower bound $\Pi\{\theta \in \text{KL}(\theta_0, \epsilon_n)\}$ for $n$ sufficiently large. First, using the argument above, for $n$ sufficiently large we have $\text{KL}(\theta_0, \epsilon_n) \supseteq \{\theta: \|r-r_0\|_\infty \leq \epsilon_n / \sqrt{C},  \|\gamma - \gamma_0\|_\infty \leq \epsilon_n / \sqrt{C}\}$ for some $C$ depending on $(K, r_0, \gamma_0)$. Thus
\begin{align*}
  \Pi\{\theta \in \text{ KL}(\theta_0, \epsilon_n)\} \geq \Pi_r\{\|r-r_0\|_\infty \leq \epsilon_n / \sqrt{C}\} \times  \Pi_\gamma\{\|\gamma - \gamma_0\|_\infty \leq \epsilon_n / \sqrt{C}\}.
\end{align*}
Using Lemma \ref{lem:s1} and \ref{lem:s2}, with $\epsilon_n = \max\{\tilde \varepsilon_n, \bar \varepsilon_n\}$ and for sufficiently large $n$,
\begin{align}
\label{eq:b1}
  \Pi\{\theta \in \text{ KL}(\theta_0, \epsilon_n)\} \geq e^{-C_{SB} n (\epsilon_n / \sqrt{C})^2} \times e^{-C_{Gam} n (\epsilon_n / \sqrt{C})^2} = c_1e^{- c_2 n \epsilon_n^2},
\end{align}
with $c_1 = 1$ and $c_2 = \frac{C_{SB} + C_{Gam}}{C}$. This completes the verification of Condition B1.

\paragraph{Conditions B2 and B3.}
Let $\mathcal H_n = \{p_\theta : \theta \in \mathcal F_n\}$ where $\mathcal F_n$ is determined by to-be-specified choices of $(d, L, U, G)$. As shown in the verification of Condition B1, we can bound the Hellinger distance in terms of Kullback-Leibler as
$$
  H^2(\theta', \theta) \leq (1/2) \int KL(p_{\theta'} \| p_\theta) F_X(dx) \leq C_K \epsilon^2,
$$
for some constant depending only on $K$ and where $\epsilon = \|r - r'\|_\infty \vee  \|\gamma - \gamma'\|_\infty$, provided that $\epsilon$ is sufficiently small. Therefore, by Lemma \ref{lem:s3} and \ref{lem:s4}, for large enough $n$ we have
\begin{align}
\label{eq:b2}
  \log N(\mathcal{H}_n, \epsilon_n, H)
  &\leq \log N(\mathcal{F}_n, \epsilon_n / \sqrt{C_K}, \|\cdot\|_\infty)
  \\&\lesssim d \log P + L \log \frac{d^T b^T_n L U \sqrt{C}}{\epsilon_n} + (K-1) \log \frac{G \sqrt{C}}{\epsilon_n}.
\end{align}
From (P4), we have $\log b_n \lesssim \log n$. Fix a large  $\kappa > 0$, and define $d = \frac{\kappa n \epsilon_n^2}{\log P}, L = \left\lfloor\frac{\kappa n \epsilon_n^2}{\log n}\right\rfloor, U^2 = \kappa n \epsilon_n^2, G = \kappa n \epsilon_n^2$. For every such $\kappa$, each term in the right-hand side of (\ref{eq:b2}) is $\lesssim n \epsilon_n^2$, and hence there exists a $c_3 > 0$ (depending on $\kappa$) such that
$$
  \log N(\mathcal{H}_n, \epsilon_n, H) \leq c_3 n \epsilon_n^2.
$$
This completes the verification of Condition B2 (for $\mathcal H_n$ determined by $\kappa$). It is also easy to show for these choices that
$$
  \kappa n \epsilon_n^2 \lesssim \min\left\{c_D d \log P, C_L L \log L - \log T, \frac{U^2}{2\sigma^2_\mu} - \log L - \log T, G - \log K_\gamma\right\}.
$$
Applying Lemma~\ref{lem:s5} and Lemma~\ref{lem:s6} we can find a sufficiently small constant $c' > 0$ (independent of $\kappa$) such that
$$
  \Pi(\theta \notin \mathcal{F}_n) \leq e^{-c' \kappa n \epsilon_n^2}.
$$
Taking $\kappa > \frac{c_2 + 4}{c'}$ and $c_4 = 1$, we have $\Pi[\theta \notin \mathcal{F}_n] \leq c_4 \exp\{-(c_2 + 4)n\epsilon_n^2\}$, which completes the verification of Condition B3. Thus, by verifying Conditions B1, B2, and B3, we have established the posterior contraction rate of $\epsilon_n$ in Theorem \ref{thm:posterior-concentration}, with $\epsilon_n = (\log n/n)^{\alpha/(2\alpha+D_0)}$.

\section{Proofs of Lemmas}
For the proofs of Lemma \ref{lem:s1} and \ref{lem:s3}, we refer to the proofs of Lemma 1 and 2 in \citet{orlandi2021density}, and for the proof of Lemma \ref{lem:s5}, we refer to the proof of Lemma 3 in \citet{linero2024generalized}. We provide proofs for Lemma \ref{lem:s2}, \ref{lem:s4}, \ref{lem:s6}, \ref{lem:s7}, and \ref{lem:s8} below.

\paragraph*{Proof of Lemma \ref{lem:s2}.} Let $\bar \varepsilon_n \propto (\log n / n)^{1/2}$ and let $C_{Gam} = K - 1$. Let $a_0 = \min_j \pi_\gamma(\gamma_{0j})$ where $\pi_\gamma(\gamma)$ denotes the prior mass function of the $\log \Gam(a_\gamma, b_\gamma)$ distribution. For large enough $n$ we have
\begin{align*}
  \Pi_\gamma(\|\gamma - \gamma_0\|_\infty \le \bar \varepsilon_n)
  &= \prod_{j = 1}^{K - 1} \Pi_\gamma(|\gamma_j - \gamma_{0j}| \le \bar \varepsilon_n)
  \\&\geq \prod_{j = 1}^{K - 1}\left\{ \bar \varepsilon_n \pi_\gamma(\gamma_{0j}) \right\}
  \\&\geq a_0^{K - 1} \bar \varepsilon_n^{K - 1}
  \\&\geq a_0^{K-1} \left( \frac{\log n}{n} \right)^{K-1}
  \\&\geq n^{-(K-1)} = e^{-(K-1) n \, \bar \varepsilon_n^2}.
\end{align*}
Here, the second line follows from the fact that $\pi_\gamma(\gamma)$ has positive density at $\gamma_{0j}$, while the final line follows from the fact that $(\log n)^{K-1} > a_0^{1 - K}$ eventually and that $n^{-1} = e^{-n \bar \varepsilon_n^2}$.

\paragraph*{Proof of Lemma \ref{lem:s4}.} The set $\mathcal{G}^\star_n$ is a $(K-1)$-dimensional hypercube with side length $2G$. To cover this with $\epsilon$-balls in the $L_\infty$-norm, we need no more than $\lceil 2G/2\epsilon \rceil = \lceil G/\epsilon \rceil$ evenly spaced intervals along each dimension. Thus the total number of balls required is no more than $\lceil G/\epsilon \rceil^{K-1} \lesssim \left(\frac{G}{\varepsilon}\right)^{K-1}$. This completes the proof.

\paragraph*{Proof of Lemma \ref{lem:s6}.}
We bound $\Pi(\gamma \notin \mathcal{G}^\star_n)$ as:
\begin{align*}
\Pi(\gamma \notin \mathcal{G}^\star_n) = \Pi(\|\gamma\|_\infty > G) & = \Pi(|\gamma_j| > G \text{ for some } j) \\
& \leq \sum_{j=1}^{K-1} \Pi(|\gamma_j| > G) \leq (K-1) \{\Pi(U > G) + \Pi(U < -G)\},
\end{align*}
where $U = \log V$ and $V \sim \text{Gamma}(a_\gamma, b_\gamma)$. We then bound the tail probabilities using Markov's inequality. Let $C_\gamma$ be such that $\E(V^{C_\gamma})$ and $\E(V^{-C_\gamma})$ both exist (i.e., $C_\gamma < a_\gamma$ for the appropriate moment of the inverse gamma to exist). Then
\begin{align*}
  \Pr(U > G) &= \Pr(V^{C_\gamma} > e^{C_\gamma \, G}) \le \E(V^{C_\gamma}) e^{-C_\gamma \, G}  \qquad \text{and} \\
  \Pr(U < -G) &= \Pr(V^{-C_\gamma} > e^{C_\gamma \, G}) \le \E(V^{-C_\gamma}) e^{-C_\gamma \, G}.
\end{align*}
Therefore, $\Pi(\gamma \notin \mathcal{G}^\star_n) \leq K_\gamma e^{-C_\gamma G}$, for $K_\gamma = (K-1) \{\E(V^{C_\gamma} + V^{-C_\gamma})\}$ and $C_\gamma < a_\gamma$. This completes the proof.

\paragraph*{Proof of Lemma \ref{lem:s7}.} We have, $Z_i = \mu + \epsilon_i$, with $\epsilon_i \sim h(\epsilon) = \exp(\epsilon - e^\epsilon)$, and $\mu = -r(x)$. Then the family $\{f(z \mid \mu): \mu \in \R \}$ is a location family, with location = $\mu$. We next verify the conditions of Corollary 1 (point 2) of \citet{linero2024generalized}.

Note that $h$ is twice continuously-differentiable function. Also, it satisfies Condition T of \citet{linero2024generalized} as:
\begin{enumerate}
  \item $h(x) \to 0$ as $x \to \pm \infty$,

  \item $S(\mu; Z) = \frac{\partial}{\partial \mu} \log h(Z - \mu) = e^{Z - \mu} - 1$ and it follows that $\E_{\mu = 0} \sup_{|\delta| \le 1} S(\delta; Z)^2 < \infty$.

  \item $I(\delta; Z) = -\frac{\partial^2}{\partial \mu^2} \log h(Z - \mu) = e^{Z - \mu}$ and it similarly follows that $\E_{\mu = 0} \sup_{|\delta| \le 1} I(\delta; Z) < \infty$.
\end{enumerate}
Thus, the family $\{f(z \mid \mu): \mu \in \R \}$ satisfies Condition T of \citet{linero2024generalized}.

\paragraph*{Proof of Lemma \ref{lem:s8}.}
By direct calculation,
\begin{align*}
  \Pr(Y > k \mid \gamma, r)
  &= \Pr(Z_1 > 0 \cap Z_2 > 0 \cap \cdots \cap Z_k > 0 \mid \gamma, r)
  \\&= \prod_{j = 1}^k \Pr(Z_k > 0 \mid \gamma, r)
  \\&= \prod_{j = 1}^k \exp(-e^{\gamma_k + r})
  \\&= \exp\left(-e^r \sum_{j = 1}^je^{\gamma_j}\right).
  \\&= \exp(-e^{c_k + r}).
\end{align*}


\end{document}